\newtheorem{example}{Example}
\newtheorem{prop}{Proposition}
\newcommand{\inverse}[1]{\mkern 1.5mu\overline{\mkern-1.5mu#1\mkern-1.5mu}\mkern 1.5mu}
\DeclareFontFamily{U}{mathb}{\hyphenchar\font45}
\DeclareFontShape{U}{mathb}{m}{n}{
  <5> <6> <7> <8> <9> <10> gen * mathb
  <10.95> mathb10 <12> <14.4> <17.28> <20.74> <24.88> mathb12
}{}
\DeclareSymbolFont{mathb}{U}{mathb}{m}{n}
\DeclareMathSymbol{\drsh}{3}{mathb}{"EB}
\newsavebox{\blankboxdisplay}
\savebox{\blankboxdisplay}{\hspace{0.1ex}\tikz[baseline=0.1em]{%
    \node [shape=rectangle, anchor=south, draw, solid, inner sep=0pt, minimum width=1ex, minimum height=0.9em] (char) {};}%
  \hspace{0.1ex}}
\newsavebox{\blankboxtext}
\savebox{\blankboxtext}{\hspace{0.1ex}\tikz[baseline=0.1em]{%
    \node [shape=rectangle, anchor=south, draw, solid, inner sep=0pt, minimum width=1ex, minimum height=0.9em] (char) {};}%
  \hspace{0.1ex}}
\newsavebox{\blankboxscript}
\savebox{\blankboxscript}{\scriptsize\hspace{0.1ex}\tikz[baseline=0.1em]{%
    \node [shape=rectangle, anchor=south, draw, solid, inner sep=0pt, minimum width=1ex, minimum height=0.9em] (char) {};}%
  \hspace{0.1ex}}
\newsavebox{\blankboxscriptscript}
\savebox{\blankboxscriptscript}{\tiny\hspace{0.1ex}\tikz[baseline=0.1em]{%
    \node [shape=rectangle, anchor=south, draw, solid, inner sep=0pt, minimum width=1ex, minimum height=0.9em] (char) {};}%
  \hspace{0.1ex}}
\newcommand{\blank}{\makeatletter%
  \ensuremath{\mathchoice%
    {\text{\usebox\blankboxdisplay}}%
    {\text{\usebox\blankboxtext}}%
    {\text{\usebox\blankboxscript}}%
    {\text{\usebox\blankboxscriptscript}}}%
  \makeatother}
\newlength{\edgelength}
\newcommand{\trans}[4]{%
  \begin{tikzpicture}[auto, shorten >=1pt, >=latex, baseline=(l.base), inner sep=0pt, outer xsep=0.3333em]
    \node (l) {\ensuremath{#1}};%
    \setlength{\edgelength}{\widthof{\scriptsize\ensuremath{#2/#3}}+0.5cm}%
    \node[base right=\edgelength of l] (r) {\ensuremath{#4}};%
    \path[->] (l.mid east) edge node[inner sep=0pt] {\scriptsize\ensuremath{#2/#3}} (r.mid west);%
  \end{tikzpicture}%
}
\newcommand{\transa}[3]{%
  \begin{tikzpicture}[auto, shorten >=1pt, >=latex, baseline=(l.base), inner sep=0pt, outer xsep=0.3333em]
    \node (l) {\ensuremath{#1}};%
    \setlength{\edgelength}{\widthof{\scriptsize\ensuremath{#2}}+0.5cm}%
    \node[base right=\edgelength of l] (r) {\ensuremath{#3}};%
    \path[->] (l.mid east) edge node[inner xsep=0pt, inner ysep=0.2em] {\scriptsize\ensuremath{#2}} (r.mid west);%
  \end{tikzpicture}%
}
\newsavebox{\circledOnebox}
\savebox{\circledOnebox}{\tikz[baseline=(s.base)]{\node[draw, circle, inner sep=0.1mm] (s) {$1$};}}
\newcommand*{\circledOne}{\usebox{\circledOnebox}}
\newsavebox{\circledZerobox}
\savebox{\circledZerobox}{\tikz[baseline=(s.base)]{\node[draw, circle, inner sep=0.1mm, outer sep=0pt] (s) {$0$};}}
\newcommand*{\circledZero}{\usebox{\circledZerobox}}
\DeclareMathOperator{\dom}{dom}
\DeclareMathOperator{\im}{im}
\newcommand{\revbin}{\overleftarrow{\operatorname{bin}}}
\DeclareMathOperator{\idGrp}{\mathbbm{1}}
\newcommand*{\ComplexityClass}[1]{\textsc{#1}}
\newcommand*{\NSPACE}{\ComplexityClass{NSpace}}
\newcommand*{\DSPACE}{\ComplexityClass{DSpace}}
\newcommand*{\PSPACE}{\ComplexityClass{PSpace}}
\newcommand*{\NL}{\ComplexityClass{NL}}
\renewcommand*{\O}{O}
\newcommand*{\id}{\operatorname{id}}
\newcommand*{\sourcekeyword}[1]{\textbf{#1}}
\newcommand{\problem}[3][]{%
  \par\vspace{0.125cm plus 0.1cm minus 0.05cm}\hspace{-0.5\parindent}\begin{tabularx}{\textwidth-0.5\parindent}{l@{\hspace{1ex}}X}%
    \if\relax\detokenize{#1}\relax%
    \else%
      \textnormal{\textbf{Constant:}}&#1\\%
    \fi%
    \textnormal{\textbf{Input:}}&#2\\%
    \textnormal{\textbf{Question:}}&#3\\%
  \end{tabularx}\vspace{0.125cm plus 0.1cm minus 0.05cm}\par%
  }
\begin{document}

  \begin{frontmatter}
    \title{On the Complexity of the Word Problem for\\Automaton Semigroups and Automaton Groups}

    \author[dangeli]{Daniele D'Angeli}
    \ead{dangeli@math.tugraz.at}
    \address[dangeli]{Institut für Diskrete Mathematik,\\
      Technische Universität Graz,\\
      Steyrergasse 30,
      8010 Graz, Austria}

    \author[rodaro]{Emanuele Rodaro\corref{cor}}
    \ead{emanuele.rodaro@polimi.it}
    \address[rodaro]{Dipartimento di Matematica,\\
      Politecnico di Milano,\\
      Piazza Leonardo da Vinci, 32,
      20133 Milano, Italy}

    \author[waechter]{Jan Philipp Wächter}
    \ead{jan-philipp.waechter@fmi.uni-stuttgart.de}
    \address[waechter]{Institut für Formale Methoden der Informatik (FMI),\\
      Universität Stuttgart,
      Universitätsstraße 38,
      70569 Stuttgart, Germany}

    \cortext[cor]{Corresponding author}

    \begin{abstract}
      In this paper, we study the word problem for automaton semigroups and automaton groups from a complexity point of view. As an intermediate concept between automaton semigroups and automaton groups, we introduce automaton-inverse semigroups, which are generated by partial, yet invertible automata. We show that there is an automaton-inverse semigroup and, thus, an automaton semigroup with a $\PSPACE$-complete word problem. We also show that there is an automaton group for which the word problem with a single rational constraint is $\PSPACE$-complete. Additionally, we provide simpler constructions for the uniform word problems of these classes. For the uniform word problem for automaton groups (without rational constraints), we show $\NL$-hardness. Finally, we investigate a question asked by Cain about a better upper bound for the length of a word on which two distinct elements of an automaton semigroup must act differently.

      {\noindent\scriptsize{}A detailed listing of the contributions of this paper can be found at the end of this paper.}
    \end{abstract}

    \begin{keyword}
      automaton groups\sep automaton semigroups, word problem, $\PSPACE$
      \MSC[2010] 20E08\sep 20F10\sep 20M05\sep 20M18\sep 68Q17\sep 68Q45\sep 20M30
    \end{keyword}

    \end{frontmatter}

  \begin{section}{Introduction}
    Traditionally, algebraic structures have been presented by specifying generators and relations. There is, however, another kind of presentation based on finite Mealy automata. Algorithmically, it has a remarkable advantage: it guarantees decidability of the word problem for the generated groups and semigroups. This stands in contrast to the traditional way of presenting such structures: even if the set of generators and the set of relations are both finite, one can (finitely) present a group with undecidable word problem (a classical result due to Boone and Novikov from the mid 50s). This means in particular that the word problem is not decidable for every group and every semigroup. Thus, not every group or semigroup is an \emph{automaton group} or \emph{automaton semigroup}, i.\,e.\ generated by an automaton. While at first sight this seems to be a severe limitation, automaton groups and semigroups have proven to have deep connections with many areas of mathematics from the theory of profinite groups and complex dynamics to theoretical computer science.

    Many examples of groups with interesting properties are in fact automaton groups. The most notable example is probably Grigorchuk's first example of a group of intermediate (i.\,e.\ super-polynomial but sub-exponential) growth (see e.\,g.\  \cite{grigorchuk2008groups} for an accessible introduction). The existence of such a group answers the classical Milnor problem. This group is also an example of an infinite, finitely generated torsion group (Burnside problem) and an amenable non-elementary amenable group (von Neumann problem). Its discovery led to the development of a new and exiting branch of research in group theory by Grigorchuk and others: the study of finitely presented groups that act (transitively) on each level of an infinite regular rooted tree (see e.\,g.\ \cite{nekrashevych2005self}). This is just a different way of describing automaton groups.

    Decidability of the word problem for automaton groups is most often proven by giving a (deterministic) exponential time algorithm. The idea behind this algorithm is that one can give an exponential upper bound on the level of the tree on which two distinct group elements act differently. For some special sub-classes of automaton groups, there is a better upper bound on how many paths in the tree need to be checked (see e.\,g.\ \cite{bondarenko2012growth} for polynomial-activity automata, \cite{Bondarenko} for Hanoi Tower groups, \cite{nekrashevych2005self} for contracting automaton groups). However, as Steinberg notes in \cite{steinberg2015some}, there is a more straightforward -- at least from a computer science perspective -- nondeterministic algorithm requiring linear space (and, thus, also yielding a deterministic exponential time algorithm). This algorithm leads Steinberg to the question whether there is an automaton group with a \PSPACE-complete word problem.

    In this paper, we give some partial answers to Steinberg's question. We do not only consider automaton groups but also automaton semigroups. This class is a generalization where the generating automata are not required to be invertible. While it is less studied than automaton groups, it has attracted quite some interest recently (for example in \cite{aklmp12, DaRo13, DaRo14, DaRo16, GKP} or in the work of Cain \cite{cain2009automaton}, and Cain and Brough \cite{brough2015automaton, brough2016automaton}). Indeed, Gillibert showed that the finiteness problem for automaton semigroups is undecidable \cite{Gilbert13} while an answer to the corresponding question for automaton groups is yet unknown. It turns out that Steinberg's algorithm is actually an algorithm for automaton semigroups (which, of course, can also be applied for automaton groups); a fact, which we state in \autoref{prop:uniformUpperBound} together with a precise analysis of its complexity. This is interesting since we show \PSPACE-hardness of the word problem for automaton semigroups in the uniform and non-uniform case (\autoref{prop:uniformPSPACEhard} and \autoref{prop:nonuniformPSPACEhardness}).

    Unfortunately, these proofs transfer to automaton groups only in the presence of at least a single rational constraint. To bridge this gap between automaton semigroups and automaton groups, we introduce the concept of automaton-inverse semigroups.\footnote{This concept also appeared for example in \cite{Nekrashevych2006self} and a similar concept was studied by Olijnyk, Sushchansky and Slupik (see \cite{Olijnyk2010inverse} for this line of research).} These semigroups are generated by invertible automata which, however, are not necessarily complete anymore. Since we can prove that the word problem is still \PSPACE-complete in that case (again, uniform and non-uniform), we have the hope that this intermediate concept is useful when transferring other results from automaton semigroups to automaton groups as well.

    For the uniform word problem for automaton groups (without rational constraints), we give a simple construction which shows \NL-hardness (\autoref{prop:groupUniformNLhardness}).

    Additionally, we consider a question asked by Cain in \cite[Open problem~3.6]{cain2009automaton}: is there a better upper bound on the first level of the tree on which two distinct elements of an automaton semigroup act differently than the trivial one? This question is related to the \PSPACE-completeness of the word problem and, therefore, it is probably not surprising that this upper bound cannot be significantly lower: it must be exponential (\autoref{prop:lowerBoundOnWordLength}).
  \end{section}

  \begin{section}{Preliminaries}
    \paragraph{Fundamentals}
    We use $A \sqcup B$ to denote the disjoint union of two sets $A$ and $B$. The set $\mathbb{N}$ is the set of natural numbers including $0$ and $\mathbb{Z}$ is the set of integers. Finally, $\mathbb{R}$ is the set of real numbers.

    An \emph{alphabet} $\Sigma$ is a non-empty finite set of \emph{letters}. By $\Sigma^*$, we denote the set of all finite words over the alphabet $\Sigma$, including the empty word, which we denote by $\varepsilon$. Furthermore, we set $\Sigma^+ = \Sigma^* \setminus \{ \varepsilon \}$.

    We assume the reader to be familiar with basic algebra notions such as that of a semigroup, a monoid, a group and neutral elements (for more information, see for example the book by Howie \cite{howie}).  However, we want to emphasize the difference between a (semigroup) inverse and a group inverse. An element $\inverse{s}$ of a semigroup $S$ is called (semigroup) \emph{inverse} to another element $s \in S$ if $\inverse{s} s \inverse{s} = \inverse{s}$ and $s \inverse{s} s = s$. In contrast, an element $\inverse{m}$ of a monoid $M$ (which may be a group) is a \emph{group inverse} of another element $m \in M$ if $\inverse{m} m = m \inverse{m} = \idGrp$, where $\idGrp$ is the neutral element of $M$. Note that in this case, $\inverse{m}$ is also a (semigroup) inverse of $m$ but that the converse does not hold in general. For instance, consider the monoid $\{ 1, 0 \}$ with the multiplication $1 \cdot 1 = 1$ and $1 \cdot 0 = 0 \cdot 1 = 0 \cdot 0 = 0$ (i.\,e.\ $1$ is the neutral element and $0$ is a zero element). In this monoid, both, $0$ and $1$, are their own (semigroup) inverses, respectively. However, $0$ does not have a group inverse! In this paper, \emph{inverse} always refers to a semigroup inverse. A semigroup $S$ is called an \emph{inverse semigroup} if every element $s \in S$ has a unique inverse $\inverse{s}$ (see e.\,g.\ \cite{howie, Petrich1984} for more information on inverse semigroups).

    Inverse semigroups are closely related to \emph{partial functions} (or \emph{partial maps}). We write $f: A \to_p B$ for a partial function from the set $A$ to the set $B$. By $\dom f$, we denote the \emph{domain} of $f$, i.\,e.\ the set of $a \in A$ such that $f(a)$ is defined. The \emph{image} of $f$, i.\,e.\ the subset $f(\dom f)$ of $B$, is denoted by $\im f$. If $\dom f = A$, then we say $f$ is \emph{total} and write $f: A \to B$. A partial function $\inverse{f} : B \to_p A$ is \emph{inverse} to another partial function $f : A \to_p B$ if $\dom \inverse{f} = \im f$, $\im \inverse{f} = \dom f$ and $f(\inverse{f}(f(a))) = f(a)$ for all $a \in \dom f$ as well as $\inverse{f}(f(\inverse{f}(b))) = \inverse{f}(b)$ for all $b \in \im f$; in slight abuse of terminology, we say that $\inverse{f}$ is a \emph{group inverse} of $f$ if $\dom \inverse{f} = \im f$, $\im \inverse{f} = \dom f$ and $\inverse{f}(f(a)) = a$ for all $a \in \dom f$ as well as $f(\inverse{f}(b)) = b$ for all $b \in \im f$. Note that in the latter case both functions are injective and the group inverse is unique.

    \paragraph{Complexity and Decidability}
    In this paper, we use only basic and well-known parts of computability and complexity theory; for an introduction on the subject see any textbook (e\,g.\ \cite{Pap94}). We also assume the reader to be familiar with Big $\O$ Notation.

    We say a (non-deterministic) Turing Machine is \emph{bounded in space} by a function $s: \mathbb{N} \to \mathbb{N}$ if, on input of a problem instance of length $n$, all computations (whether accepting or not) use at most $s(n)$ positions on any tape.

    Let $f: \mathbb{N} \to \mathbb{R}$ be a function. Then $\NSPACE(f)$ is a set of decision problems: a problem is in $\NSPACE(f)$ if it can be decided by a non-deterministic Turing Machine which is bounded in space by a function $s \in \O(f)$. Analogously, a decision problem is in $\DSPACE(f)$ if it can be decided by a deterministic Turing Machine which is bounded in space by a function $s \in \O(f)$. $\PSPACE$ is the union of $\NSPACE(p)$ for all polynomials $p$, which, by Savitch's Theorem \cite{savitch1970relationships}, coincides with the union of $\DSPACE(p)$ for all polynomials $p$. $\NL$ is the class $\NSPACE(\log)$.

    A problem $P$ is (many-one) \emph{reducible} to a problem $R$ if there is a computable (total) function $f$ that maps an instance $i$ of $P$ to an instance $f(i)$ of $R$ in such a way that $i$ has a positive answer if and only if so has $f(i)$. If $f$ can be computed by a deterministic Turing Machine that is bounded in space by a function $s \in \O(\log)$, then $P$ is \emph{$\log$-space reducible} to $R$. It is well known that $\log$-space reducibility is transitive.

    We say a decision problem $P$ is \emph{hard} (under many-one $\log$-space reduction) for a class $\mathcal{C}$ of decision problems if all problems $R \in \mathcal{C}$ are $\log$-space reducible to $P$. Most of the time, we will show hardness of a problem $P$ for a class $\mathcal{C}$ by showing that a known $\mathcal{C}$-hard problem is $\log$-space reducible to $P$ (using the transitivity of $\log$-space reducibility).

    \paragraph{Automata}\enlargethispage*{\baselineskip}
    A \emph{synchronous finite-state transducer} (without initial or final states), or simply \emph{automaton}, is a tuple $\mathcal{T} = (Q, \Sigma, \Gamma, \delta)$ where $Q$ is a non-empty finite set, $\Sigma$ and $\Gamma$ are alphabets and $\delta$ is a subset of $Q \times \Sigma \times \Gamma \times Q$. The elements of $Q$ are called \emph{states}, $\Sigma$ is the \emph{input alphabet}, $\Gamma$ is the \emph{output alphabet} and an element $(q, a, b, p) \in \delta$, which, from now on, we write as $\trans{q}{a}{b}{p}$, is a \emph{transition} from $q$ to $p$ with \emph{input} $a$ and \emph{output} $b$. The \emph{size} $|\mathcal{T}|$ of an automaton $\mathcal{T}$ is its number of states. Often, automata are represented graphically. In such a graphical representation, we use
    \begin{center}
      \begin{tikzpicture}[auto, shorten >=1pt, >=latex]
        \node[state] (q) {$q$};
        \node[state, right=of q] (p) {$p$};
        \path[->] (q) edge node {$a/b$} (p);
      \end{tikzpicture}
    \end{center}
    to indicate that the automaton has a transition $\trans{q}{a}{b}{p}$.

    An automaton $\mathcal{T} = (Q, \Sigma, \Gamma, \delta)$ is \emph{deterministic} if
    \[
      d_a(q) = \left| \left\{ \trans{q}{a}{b}{p} \mid \trans{q}{a}{b}{p} \in \delta, b \in \Gamma, p \in Q \right\} \right| \leq 1
    \]
    for all $a \in \Sigma$ and $q \in Q$; it is \emph{complete} if $d_a(q) \geq 1$ for all $a \in \Sigma$ and $q \in Q$.

    A deterministic automaton $\mathcal{T} = (Q, \Sigma, \Gamma, \delta)$ induces a partial right action of $\Sigma^*$ on $Q$ defined by $q \cdot a = p$ if $\trans{q}{a}{b}{p} \in \delta$ for some $b \in \Gamma$; the action is total if $\mathcal{T}$ is complete. Additionally, it induces a partial map $\circ: Q \times \Sigma^* \to_p \Gamma^*$ that is defined inductively: let $\circ(q, \varepsilon) = \varepsilon$ and $\circ(q, a) = b$ if $\trans{q}{a}{b}{p} \in \delta$ for some $p \in Q$; for $u \in \Sigma^+$, let $\circ(q, ua)$ be the concatenation of $\circ(q, u)$ and $\circ(q \cdot u, a)$ (if defined). Again, the map is total if $\mathcal{T}$ is complete. For convenience, we write $q \circ u$ instead of $\circ(q, u)$ and consider $q \circ{}$ as the partial map $q \circ{}: \Sigma^* \to_p \Gamma^*$ with $q \circ (u) = q \circ u$. Note that these partial maps are length-preserving (where they are defined).

    We also define \enquote{inverse} versions for these notions; here, we change the role of input and output. For any automaton $\mathcal{T} = (Q, \Sigma, \Gamma, \delta)$, we define its \emph{inverse} automaton $\inverse{\mathcal{T}} = (\inverse{Q}, \Gamma, \Sigma, \inverse{\delta})$ where $\inverse{Q}$ is a disjoint copy of $Q$ and $\inverse{\delta}$ is defined by swapping input and output: $\inverse{\delta} = \{ \trans{\inverse{q}}{b}{a}{\inverse{p}} \mid \trans{q}{a}{b}{p} \in \delta \}$. We say an automaton $\mathcal{T}$ is \emph{inverse-deterministic} if $\inverse{\mathcal{T}}$ is deterministic and $\mathcal{T}$ is \emph{inverse-complete} if $\inverse{\mathcal{T}}$ is complete. Note that, for a deterministic and inverse-deterministic automaton, $q \circ{}$ (where $q$ is a state) and $\inverse{q} \circ{}$ (where $\inverse{q}$ is a state of the inverse automaton) are injective and mutually group inverse.

    An automaton $\mathcal{T} = (Q, \Sigma, \Gamma, \delta)$ is \emph{reversible} if
    \[
      \bar{d}_a(p) = \left| \left\{ \trans{q}{a}{b}{p} \mid \trans{q}{a}{b}{p} \in \delta, b \in \Gamma, q \in Q \right\} \right| \leq 1
    \]
    for all $a \in \Sigma$ and $p \in Q$ (i.\,e.\ if it is co-deterministic with respect to the input). Graphically, $\mathcal{T}$ is reversible if
    \begin{center}
      \begin{tikzpicture}[auto, shorten >=1pt, >=latex, baseline, every node/.style={anchor=base}]
        \node[state] (q) {$q$};
        \node[state, right=of q] (p) {$p$};
        \path[->] (q) edge node[above] {$a/b$} (p);
      \end{tikzpicture}
      and
      \begin{tikzpicture}[auto, shorten >=1pt, >=latex, baseline, every node/.style={anchor=base}]
        \node[state] (q) {$q'$};
        \node[state, right=of q] (p) {$p$};
        \path[->] (q) edge node[above] {$a/c$} (p);
      \end{tikzpicture}
    \end{center}
    implies $q = q'$ for every $a \in \Sigma$ and every $p \in Q$. It is \emph{bireversible} if, additionally, its inverse is reversible (i.\,e.\ if $\mathcal{T}$ is co-deterministic with respect to the output). {\makeatletter\@endparpenalty=\@medpenalty Graphically, this means that, additionally,
    \begin{center}
      \begin{tikzpicture}[auto, shorten >=1pt, >=latex, baseline, every node/.style={anchor=base}]
        \node[state] (q) {$q$};
        \node[state, right=of q] (p) {$p$};
        \path[->] (q) edge node[above] {$a/b$} (p);
      \end{tikzpicture}
      and
      \begin{tikzpicture}[auto, shorten >=1pt, >=latex, baseline, every node/.style={anchor=base}]
        \node[state] (q) {$q'$};
        \node[state, right=of q] (p) {$p$};
        \path[->] (q) edge node[above] {$c/b$} (p);
      \end{tikzpicture}
    \end{center}
    implies $q = q'$ for every $b \in \Gamma$ and every $p \in Q$.}

    Note that reversibility does not imply determinism for non-complete automata. Neither does bireversibility imply inverse-determinism, as the example in \autoref{fig:bireversibleNotInverseDeterministic} demonstrates.
    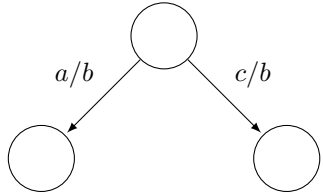
\begin{figure}[h]
      \centering
      \begin{tikzpicture}[auto, shorten >=1pt, >=latex]
        \node[state] (q) {};
        \node[state, below left=of q] (qa) {};
        \node[state, below right=of q] (qc) {};

        \path[->] (q) edge node[swap] {$a/b$} (qa)
                  (q) edge node {$c/b$} (qc);
      \end{tikzpicture}%
      \caption{A bi\-re\-ver\-si\-ble but not in\-verse-de\-ter\-mi\-nis\-tic au\-to\-ma\-ton.}\label{fig:bireversibleNotInverseDeterministic}%
    \end{figure}

    For any two automata $\mathcal{T}_1 = (Q_1, \Sigma_1, \Gamma_1, \delta_1)$ and $\mathcal{T}_2 = (Q_2, \Sigma_2, \Gamma_2, \delta_2)$, one may define their \emph{union} automaton $\mathcal{T}_1 \cup \mathcal{T}_2 = (Q_1 \cup Q_2, \Sigma_1 \cup \Sigma_2, \Gamma_1 \cup \Gamma_2, \delta_1 \cup \delta_2)$.

    \paragraph{Acceptors}
    A \emph{spelling finite-state acceptor}, or simply \emph{acceptor}, is a tuple $\mathcal{A} = (Z, \Sigma, \delta, I,\allowbreak F)$ where $Z$ is a non-empty finite set, $\Sigma$ is an alphabet, $\delta$ is a subset of $Z \times \Sigma \times Z$, $I \subseteq Z$ is a set of \emph{initial} states and $F \subseteq Q$ is a set of \emph{final} states. As with automata, the elements of $Z$ are called \emph{states}, $\Sigma$ is the (input) \emph{alphabet} and an element $(z, a, z') \in \delta$ is an \emph{$a$-transition} from $z$ to $z'$, written as $\transa{z}{a}{z'}$. As a shorthand notation, we write $z \cdot a$ for the set $\{ z' \mid \transa{z}{a}{z'} \in \delta \}$ where $z \in Z$ and $a \in \Sigma$. The \emph{size} $|\mathcal{A}|$ of an acceptor $\mathcal{A}$ is its number of states. We also use the term \emph{$\Sigma$-acceptor} for an acceptor with alphabet $\Sigma$. We represent acceptors similar to automata: we simply omit the output (i.\,e.\ the $/b$ part) and mark final states with double circles (see \autoref{fig:T} for an example of a final state).

    An acceptor $\mathcal{A} = (Z, \Sigma, \delta, I, F)$ is \emph{deterministic} if $|I| = 1$ and
    \[
      d_a(z) = | \{ \transa{z}{a}{z'} \mid \transa{z}{a}{z'} \in \delta, z' \in Z \} | \leq 1
    \]
    for all $a \in \Sigma$ and $z \in Z$; it is \emph{complete} if $d_a(z) \geq 1$ for all $a \in \Sigma$ and $z \in Z$.

    The accepted language $L(\mathcal{A})$ of an acceptor $\mathcal{A} = (Z, \Sigma, \delta, I, F)$ is the set of words $u = a_1 a_2 \dots a_n \in \Sigma^*$ with $a_1, a_2, \dots, a_n \in \Sigma$ for $n \in \mathbb{N}$ such that there is a sequence $(z_0, z_1, \dots, z_n)$ of states with $z_0 \in I$, $\transa{z_{i - 1}}{a_i}{z_i} \in \delta$ for all $i \in \{ 1, 2, \dots, n \}$ and $z_n \in F$.

    \paragraph{Automaton Semigroups and Groups}\enlargethispage*{\baselineskip}
    An \emph{$\mathscr{S}$-automaton} is a deterministic automaton whose input and output alphabet coincide. For brevity, we write $\mathcal{T} = (Q, \Sigma, \delta)$ instead of the more verbose $\mathcal{T} = (Q, \Sigma, \Sigma, \delta)$ in the case of automata with coinciding input and output alphabet. Let $\mathcal{T}$ be an $\mathscr{S}$-automaton with stateset $Q$, then the \emph{semigroup generated by $\mathcal{T}$}, denoted by $\mathscr{S}(\mathcal{T})$, is the closure of $\{ q \circ {} \mid q \in Q \}$ under finite composition of partial maps. A semigroup is an \emph{automaton semigroup}\footnote{Note that the notion of an \emph{automatic} semigroup is not related to that of an automaton semigroup although there is a quite recent attempt to link the two concepts, see \cite{picantin2016automatic}.} if it is isomorphic to the semigroup generated by some $\mathscr{S}$-automaton.

    An inverse-deterministic $\mathscr{S}$-automaton $\mathcal{T} = (Q, \Sigma, \delta)$ is called an \emph{$\inverse{\mathscr{S}}$-au\-to\-ma\-ton}. As an $\inverse{\mathscr{S}}$-automaton $\mathcal{T} = (Q, \Sigma, \delta)$ \emph{generates} an inverse semigroup $\inverse{\mathscr{S}}(\mathcal{T}) = \mathscr{S}(\mathcal{T} \cup \inverse{\mathcal{T}})$; it is the closure of $\{ q \circ{}, \inverse{q} \circ{} \mid q \in Q \}$ under composition of partial maps and, thus, indeed an inverse semigroup. We call a semigroup an \emph{automaton-inverse semigroup} if it is isomorphic to $\inverse{\mathscr{S}}(\mathcal{T})$ for an $\inverse{\mathscr{S}}$-automaton $\mathcal{T}$. Note the difference in definition between \enquote{automaton-inverse semigroups} and \enquote{inverse automaton semigroups} (the latter being automaton semigroups that are inverse semigroups).

    If an $\inverse{\mathscr{S}}$-automaton $\mathcal{T}$ is complete, it is also inverse-complete. In this case, the maps $q \circ{}$ and $\inverse{q} \circ{}$ are total for all states $q$ of $\mathcal{T}$ and the automaton $\mathcal{T}$ is called a \emph{$\mathscr{G}$-automaton}. For a $\mathscr{G}$-automaton $\mathcal{T}$, $\inverse{\mathscr{S}}(\mathcal{T})$ is a group; to emphasize this fact, we use the notation $\mathscr{G}(\mathcal{T})$ for this group. A group that is isomorphic to $\mathscr{G}(\mathcal{T})$ for a $\mathscr{G}$-automaton $\mathcal{T}$ is an \emph{automaton group}.

    To denote an element of an automaton structure (i.\,e.\ an automaton semigroup, au\-to\-ma\-ton-inverse semigroup or automaton group), we omit the $\circ$ and simply write $\bm{q} = q_1 q_2 \dots q_n$ to distinguish it from the partial function $\bm{q} \circ{} = q_1 q_2 \dots q_n \circ{} = q_1 \circ q_2 \circ \dots \circ q_n \circ$ induced by its action on the set of (output) words. This is in line with common algebraic notation. In structures where the inverse is unique, we use the notation $\inverse{{}\mathop{\cdot}{}}$ to denote it; for example, we have $\inverse{q_1 q_2} = \inverse{q_2} \inverse{q_1}$. Note that this is compatible with the notation $\inverse{q}$ for a state of the inverse automaton.

    \begin{example}\label{expl:automatonForZ}
      Let $\mathcal{T}$ be the following automaton.
      \begin{center}
        \begin{tikzpicture}[auto, shorten >=1pt, >=latex]
          \node[state] (+1) {$+1$};
          \node[state, right=of +1] (+0) {$+0$};

          \path[->] (+1) edge[loop left] node {$1/0$} (+1)
                         edge node {$0/1$} (+0)
                    (+0) edge[loop right] node[align=left] {$0/0$\\$1/1$} (+0)
          ;
        \end{tikzpicture}
      \end{center}
      Obviously, $+0 \circ {}$ is the identity on $\{ 0, 1 \}$ and, thus, $+0$ is a neutral element in the semigroup generated by $\mathcal{T}$. The map $+1 \circ{}$ can be seen as adding one to the least significant bit first\footnote{This is the reverse of the normal binary notation.} binary representation of a natural number. For example, we have $+1 \circ 010 = 110$ and $+1 \circ 110 = 001$. So, $\mathscr{S}(\mathcal{T})$ is the monoid $(\mathbb{N}, +)$ and the group $\mathscr{G}(\mathcal{T})$ is the additive group of integers $(\mathbb{Z}, +)$.
    \end{example}
    \begin{example}
      The following complete automaton $\mathcal{T} = (\Sigma, \Sigma, \delta)$ with $\Sigma = \{ a, b \}$
      \begin{center}
        \begin{tikzpicture}[auto, shorten >=1pt, >=latex]
        \node[state] (a) {$a$};
        \node[state, right=of a] (b) {$b$};
        
        \path[->] (a) edge[loop left] node {$a/a$} (a)
                      edge[bend left] node {$b/a$} (b)
                  (b) edge[loop right] node {$b/b$} (b)
                      edge[bend left] node {$a/b$} (a);
        ;
        \end{tikzpicture}
      \end{center}
      generates the free semigroup $\Sigma^+$. This follows from the fact that $\bm{q} \circ u$ is the prefix\footnote{A word $u$ is a prefix of a word $w$ if there is a word $v$ such that $w = uv$ holds.} of length $|\bm{q}|$ of $\bm{q}u$ for all $\bm{q}, u \in \Sigma^*$.\footnote{See also \cite{cain2009automaton}.} Notice that, here, state sequences and words come from the same set!

      If we remove the transition $\trans{b}{a}{b}{a}$, then the resulting automaton is not complete anymore and $b \circ{}\!$ is only defined on words from $\{ b \}^+$. As a consequence, we have that $ba \circ{}\!$ is undefined on any input word (except for the empty word, of course). This turns $ba$ into a zero of the semigroup generated by the modified automaton. Therefore, all non-zero elements are of the form $a^n b^m$ with $n, m \in \mathbb{N}$. Since we have $bb \circ{}\! = b \circ{}\!$, $b$ is idempotent in this semigroup, which leaves us with the elements $ba$, $a^n$ and $a^n b$ for $n \in \mathbb{N}$ that are pairwise distinct.
    \end{example}

    Usually, an automaton semigroup is defined as the semigroup generated by a \emph{complete} $\mathscr{S}$-automaton. We will call such a semigroup a \emph{complete automaton semigroup}. Including semigroups generated by non-complete $\mathscr{S}$-automata is a natural extension of this notion (which allows for defining automaton-inverse semigroups as we did above). It is clear that any complete automaton semigroup is an automaton semigroup (as defined above). On the other hand, if $S$ is an automaton semigroup, then $S^0$, i.\,e.\ the semigroup resulting from adjoining a zero to $S$, is a complete automaton semigroup.

    \begin{prop}\label{prop:SAutSGImpliesS0CompAutSG}
      If $S$ is an automaton semigroup, then $S^0$ is a complete automaton semigroup.
    \end{prop}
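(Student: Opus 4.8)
The plan is to complete $\mathcal{T}$ by redirecting every missing transition to a new absorbing ``sink'' state that emits a fresh padding symbol. Write $S = \mathscr{S}(\mathcal{T})$ for an $\mathscr{S}$-automaton $\mathcal{T} = (Q, \Sigma, \delta)$, pick a fresh letter $\# \notin \Sigma$ and a fresh state $\bot \notin Q$, and define $\mathcal{T}' = (Q \sqcup \{\bot\}, \Sigma \sqcup \{\#\}, \delta')$, where $\delta'$ retains all transitions of $\delta$ and adds \trans{\bot}{x}{\#}{\bot} for every $x \in \Sigma \sqcup \{\#\}$, adds \trans{q}{\#}{\#}{\bot} for every $q \in Q$, and adds \trans{q}{a}{\#}{\bot} for every $q \in Q$ and every $a \in \Sigma$ with $q \cdot a$ undefined in $\mathcal{T}$. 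Since we insert exactly one outgoing transition for each previously missing (state, letter)-pair and exactly one for each pair involving the new letter, $\mathcal{T}'$ stays deterministic and becomes complete; its input and output alphabet coincide, so it is a complete $\mathscr{S}$-automaton. I would then prove $\mathscr{S}(\mathcal{T}') \cong S^0$.

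The role of $\#$ is to make $\bot \circ{}$ an absorbing zero. First, $\bot \circ u = \#^{|u|}$ for every $u$, and every generator fixes every power $\#^{k}$ (reading a $\#$ always outputs $\#$ and moves to $\bot$); hence every element of $\mathscr{S}(\mathcal{T}')$ fixes $\#^{k}$, and a direct check gives $f \circ (\bot \circ{}) = \bot \circ{} = (\bot \circ{}) \circ f$, so $\bot \circ{}$ is a two-sided zero. A short induction on the computation shows that, as soon as some added transition is used while reading the input, every subsequently produced output letter is $\#$. Consequently, for $\bm{q} \in Q^+$ and $u \in \Sigma^*$, the output $\bm{q} \circ_{\mathcal{T}'} u$ is $\#$-free if and only if $\bm{q} \circ_{\mathcal{T}} u$ is defined, in which case the two coincide; otherwise $\bm{q} \circ_{\mathcal{T}'} u$ has the shape $w \#^{j}$ with $w \in \Sigma^*$ and $j \geq 1$. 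In particular, any state word containing $\bot$, and any $\bm{q} \in Q^+$ whose map is nowhere defined on $\Sigma^+$, induces the zero. This suggests the assignment $\psi$ that sends the zero of $\mathscr{S}(\mathcal{T}')$ to the adjoined zero of $S^0$ and sends every other element, which is necessarily of the form $\bm{q} \circ_{\mathcal{T}'}{}$ with $\bm{q} \in Q^+$, to $\bm{q} \circ_{\mathcal{T}}{} \in S$.

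The crux is to show that $\psi$ is well defined and injective, i.e.\ that the forced total outputs in the ``$\#$-region'' do not separate elements that already coincide in $S^0$. Here I would exploit that the domain of an automaton-semigroup element is prefix-closed: for $u \notin \dom(\bm{q} \circ_{\mathcal{T}})$ the first position at which the cascade breaks equals the length of the longest prefix of $u$ lying in the domain, and the output generated up to that point is exactly the image of that prefix. Thus both the break position and the pre-break output depend only on the partial map $\bm{q} \circ_{\mathcal{T}}$; so if $\bm{q} \circ_{\mathcal{T}} = \bm{p} \circ_{\mathcal{T}}$, then $\bm{q}$ and $\bm{p}$ agree on all of $\Sigma^*$ in $\mathcal{T}'$ and hence (reducing any input containing $\#$ to its $\Sigma$-prefix as in the previous paragraph) on all of $(\Sigma \sqcup \{\#\})^*$. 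Well-definedness and injectivity follow, surjectivity is immediate, and the homomorphism property reduces to $(\bm{q} \bm{p}) \circ_{\mathcal{T}} = (\bm{q} \circ_{\mathcal{T}})(\bm{p} \circ_{\mathcal{T}})$ together with absorption of the zero. The one bookkeeping subtlety, which I expect to be the main nuisance in a clean write-up rather than a genuine difficulty, is the case in which $S$ already possesses a zero (the nowhere-defined map): that element then collapses onto $\bot \circ{}$, and one verifies that $\mathscr{S}(\mathcal{T}') \cong S = S^0$ under the usual convention, so the statement holds in either case.
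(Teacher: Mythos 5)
Your construction and argument are, up to renaming (your $\bot$-state/$\#$-letter are the paper's $0$-state/$\bot$-letter), exactly the paper's proof: the same sink-state completion, the same key observation that the output in $\mathcal{T}'$ is padding-free if and only if the original partial action is defined (in which case the two outputs coincide, and otherwise the output has the shape $w\#^j$), and the same generator-wise isomorphism between $\mathscr{S}(\mathcal{T}')$ and $S^0$, merely written in the opposite direction. The main body, including the prefix-closure argument that the break position and pre-break output are determined by the partial map $\bm{q} \circ_{\mathcal{T}}$ alone, is correct.

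The one place you diverge is your final ``bookkeeping subtlety'', and there your resolution is not consistent. You are right that a problem arises when the map with domain $\{\varepsilon\}$ lies in $S$ (this really happens, e.g.\ for the element $ba$ in the paper's second example): then $\bm{\hat{q}} = \bot \circ{}$ for some $\bm{q} \in Q^+$, and your $\psi$ -- like the paper's $\iota$, which silently passes over this case -- fails to be a bijection onto $S \sqcup \{0\}$. But your fix mixes two conventions for $S^0$. In the generic case you prove that $S \sqcup \{0\}$, with a \emph{freshly} adjoined zero, is a complete automaton semigroup, i.e.\ you read $S^0$ as unconditional adjunction; in the collapse case you appeal to ``$S^0 = S$ when $S$ has a zero'', i.e.\ the adjoin-only-if-necessary convention. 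These cannot both be in force: note also that ``$S$ possesses a zero'' and ``the nowhere-defined map lies in $S$'' are different conditions ($S$ can have a zero that is not the empty map -- take the one-state identity automaton -- and then, under the if-necessary convention, your generic case establishes a statement about $S \sqcup \{0\} \neq S^0$). Under the unconditional reading, which is what your $\psi$ and the paper's $\iota$ actually presuppose, the clean patch for the collapse case is different: there $\mathscr{S}(\mathcal{T}') \cong S$ with $\mathcal{T}'$ complete, so apply your construction a second time, to $\mathcal{T}'$ itself; since every element of $\mathscr{S}(\mathcal{T}')$ is a total map, nothing can collapse onto the new sink, and the second application genuinely yields $\mathscr{S}(\mathcal{T}') \sqcup \{0\} \cong S \sqcup \{0\} = S^0$.
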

    \begin{proof}
      To see this, let $\mathcal{T} = (Q, \Sigma, \delta)$ be an $\mathscr{S}$-automaton generating $S$. We extend $\mathcal{T}$ into a complete automaton $\hat{\mathcal{T}} = (\hat{Q} \sqcup \{ 0 \}, \Sigma \sqcup \{ \bot \}, \hat{\delta})$ where $\hat{Q}$ is a disjoint copy of $Q$, $0$ is a new (sink) state and $\bot$ is a new symbol. The new automaton $\hat{\mathcal{T}}$ has all of $\mathcal{T}$'s transitions (i.\,e.\ we have $\trans{\hat{q}}{a}{b}{\hat{p}} \in \hat{\delta}$ if $\trans{q}{a}{b}{p} \in \delta$) and, additionally, new transitions $\trans{\hat{q}}{a}{\bot}{0}$ whenever there is no transition with input $a$ from state $q$ in $\delta$. We also add the transitions $\trans{0}{a}{\bot}{0}$ and $\trans{\hat{q}}{\bot}{\bot}{0}$ to $\hat{\delta}$ for all $a \in \Sigma \sqcup \{ \bot \}$ and all $q \in Q$. One may verify that $\hat{\mathcal{T}}$ is complete and that $0$ is a zero of its generated semigroup, i.\,e.\ we have $0\hat{q} = 0 = \hat{q}0 = 00$ for all $q \in Q$ in the generated semigroup.

      More importantly, we have $\boldsymbol{q} = \boldsymbol{p}$ in $\mathscr{S}(\mathcal{T})$ if and only if $\bm{\hat{q}} = \bm{\hat{p}}$ in $\mathscr{S}(\hat{\mathcal{T}})$ where $\bm{q} = q_n q_{n - 1} \dots q_1$ and $\bm{p} = p_m p_{m - 1} \dots p_1$ for arbitrary states $q_1, q_2, \dots, q_n,\allowbreak p_1, p_2, \dots, p_m \in Q$ and where $\bm{\hat{q}} = \hat{q}_n \hat{q}_{n - 1} \dots \hat{q}_1$ and $\bm{\hat{p}} = \hat{p}_m \hat{p}_{m - 1} \dots \hat{p}_1$ are the corresponding state sequences for $\hat{\mathcal{T}}$. Note that, if $\bm{q} \circ u$ is defined and equal to $v \in \Sigma^*$ for a $u \in \Sigma^*$, then, by construction, so is $\bm{\hat{q}} \circ u$. Now, suppose we have $\bm{q} \neq \bm{p}$ in $\mathscr{S}(\mathcal{T})$. The first case is that there is a $u \in \Sigma^*$ on which the action of one is defined (say $\bm{q}$) while the other's is not. As we have just discussed, we have $\bm{q} \circ u = \bm{\hat{q}} \circ u \in \Sigma^*$. Furthermore, we have $\bm{\hat{p}} \circ u = v' \bot^k$ for some $k > 0$ and $v \in \Sigma^*$ because $\bm{p} \circ u$ is undefined. This shows $\bm{\hat{q}} \neq \bm{\hat{p}}$ in $\mathscr{S}(\hat{\mathcal{T}})$. The second case is that there is a $u \in \Sigma^*$ such that both, $\bm{q} \circ u$ and $\bm{p} \circ u$, are defined but we have $\bm{p} \circ u = v_p \neq v_q = \bm{q} \circ u$. However, this implies $\bm{\hat{p}} \circ u = v_p \neq v_q = \bm{\hat{q}} \circ u$ and we have $\bm{\hat{p}} \neq \bm{\hat{q}}$ in $\mathscr{S}(\hat{\mathcal{T}})$.

      For the other direction, note that, if $\bm{\hat{q}} \circ u = v \in \Sigma^*$ for a $u \in \Sigma^*$ (i.\,e.\ there is no $\bot$ in $v$), then, by construction, $\bm{q} \circ u$ is defined and equal to $v$. Suppose we have $\bm{\hat{q}} \neq \bm{\hat{p}}$ in $\mathscr{S}(\hat{\mathcal{T}})$. Then, there is a word $ua$ with $u \in ( \Sigma \sqcup \{ \bot \})^*$ and $a \in \Sigma \sqcup \{ \bot \}$ such that $\bm{\hat{q}} \circ u = \bm{\hat{p}} \circ u = v$ but $\bm{\hat{q}} \circ ua = vb \neq vc = \bm{\hat{p}} \circ ua$ for $b, c \in \Sigma \sqcup \{ \bot \}$. Note that $\hat{\mathcal{T}}$ is in state $0$ after reading $\bot$ and that, therefore, $u$ cannot contain $\bot$. Similarly, $\hat{\mathcal{T}}$ will be in state $0$ after an output of $\bot$. Thus, $v$ cannot contain $\bot$ either and we have $\bm{q} \circ u = \bm{p} \circ u = v$. If $b, c \in \Sigma$, then, by the same argumentation, we have $\bm{q} \circ u = vb \neq vc = \bm{p} \circ u$. If (without loss of generality) $b = \bot$ and $c \in \Sigma$, then $\bm{q} \circ ua$ is undefined while $\bm{p} \circ ua = vc$ is not. In either case, we have $\bm{q} \neq \bm{p}$ in $\mathscr{S}(\mathcal{T})$.

      This shows that $\iota: S^0 \to \mathscr{S}(\hat{\mathcal{T}})$ given by $\iota(0) = 0$ and $\iota(q) = \hat{q}$ for every $q \in Q$ is a well-defined isomorphism and that, thus, $S^0$ is a complete automaton semigroup.
    \end{proof}

    We want to point out that the obvious question whether there is an automaton semigroup that is not a complete automaton semigroup is closely related to a question asked by Cain in \cite[Open problem 5.3]{cain2009automaton}: is there a semigroup $S$ such that $S$ is not a complete automaton semigroup but $S^0$ is?
  \end{section}

  \begin{section}{Complexity of the Word Problem}
    Before we start discussing the word problem's complexity, we introduce consistent terminology to distinguish some of its variants more clearly. The variables appearing in the definitions of the problems are used in the same meaning in the statements and proofs of the propositions below.
    \paragraph{Word Problems}
    The \emph{uniform word problem for automaton semigroups with rational constraints} is the decision problem:
    \problem
      { an $\mathscr{S}$-automaton $\mathcal{T} = (Q, \Sigma, \delta)$,\newline
        $q_1, q_2, \dots, q_n, p_1, p_2, \dots, p_m \in Q$,\newline
        $\Sigma$-acceptors $\mathcal{A}_1, \mathcal{A}_2 \dots, \mathcal{A}_r$ of size at most $R$}
      {is $q_n q_{n - 1} \dots q_1 \circ u = p_m p_{m - 1} \dots p_1 \circ u$ for all $u \in \bigcap_{k = 1}^r L(\mathcal{A}_k)$?}
    \noindent{}The \emph{uniform word problem for automaton-inverse semigroups with rational constraints} is the decision problem:
    \problem
      { an $\inverse{\mathscr{S}}$-automaton $\mathcal{T} = (Q, \Sigma, \delta)$,\newline
        $q_1, q_2, \dots, q_n, p_1, p_2, \dots, p_m \in Q \cup \inverse{Q}$ ($\inverse{Q}$ are the states of  $\inverse{\mathcal{T}}$),\newline
        $\Sigma$-acceptors $\mathcal{A}_1, \mathcal{A}_2 \dots, \mathcal{A}_r$ of size at most $R$}
      {is $q_n q_{n - 1} \dots q_1 \circ u = p_m p_{m - 1} \dots p_1 \circ u$ for all $u \in \bigcap_{k = 1}^r L(\mathcal{A}_k)$?}
    \noindent{}A variation of this problem is the \emph{uniform word problem for automaton groups with rational constraints} where the input $\inverse{\mathscr{S}}$-automaton is known to be a $\mathscr{G}$-automaton.

    Additionally, all problems also have versions without rational constraints (then simply called the \emph{uniform word problem for \dots}). Here, the acceptors are omitted from the input and the question is: is $q_n q_{n - 1} \dots q_1 \circ u = p_m p_{m - 1} \dots p_1 \circ u$ for all $u \in \Sigma^*$?

    Besides the uniform versions, we also consider the word problems for specific automaton structures. The \emph{word problem with rational constraints for an automaton semigroup} is the decision problem:
    \problem
      [ an $\mathscr{S}$-automaton $\mathcal{T} = (Q, \Sigma, \delta)$ ]
      { $q_1, q_2, \dots, q_n, p_1, p_2, \dots, p_m \in Q$,\newline
        $\Sigma$-acceptors $\mathcal{A}_1, \mathcal{A}_2 \dots, \mathcal{A}_r$ of size at most $R$}
      {is $q_n q_{n - 1} \dots q_1 \circ u = p_m p_{m - 1} \dots p_1 \circ u$ for all $u \in \bigcap_{k = 1}^r L(\mathcal{A}_k)$?}
    \noindent{}The \emph{word problem with rational constraints for an automaton-inverse semigroup} is the decision problem:
    \problem
      [ an $\inverse{\mathscr{S}}$-automaton $\mathcal{T} = (Q, \Sigma, \delta)$ and its inverse $\inverse{\mathcal{T}} = (\inverse{Q}, \Sigma, \inverse{\delta})$ ]
      { $q_1, q_2, \dots, q_n, p_1, p_2, \dots, p_m \in Q \cup \inverse{Q}$,\newline
        $\Sigma$-acceptors $\mathcal{A}_1, \mathcal{A}_2 \dots, \mathcal{A}_r$ of size at most $R$}
      {is $q_n q_{n - 1} \dots q_1 \circ u = p_m p_{m - 1} \dots p_1 \circ u$ for all $u \in \bigcap_{k = 1}^r L(\mathcal{A}_k)$?}
    \noindent{}For the \emph{word problem with rational constraints for an automaton group}, we additionally have that $\mathcal{T}$ is a $\mathscr{G}$-automaton. Again, we also have versions without rational constraints, simply called the \emph{word problem for an automaton semigroup/inverse semigroup/group}, where the acceptors are omitted from the input and the question is changed in the same way as for the uniform versions.

    \begin{subsection}{Upper Bounds}
      An upper bound for a problem's complexity is usually given in the form of an algorithm and this is what we do in this subsection. Steinberg \cite{steinberg2015some} noted that the word problems (without rational constraints) for automaton groups are decidable in $\NSPACE(n)$. The algorithm he presented is quite straightforward and similar to the one we present here. However, we adapt it for automaton semigroups and automaton-inverse semigroups and extend it to cover rational constraints.
      \begin{prop}\label{prop:uniformUpperBound}
        The following problems can be decided by a non-deterministic Turing Machine in space $\O((n + m) \log \left| \mathcal{T} \right| + r \log R)$ and are, thus, in non-deterministic linear space:
        \begin{itemize}[noitemsep]
          \item the uniform word problem for automaton semigroups with rational constraints,
          \item the uniform word problem for automaton-inverse semigroups with rational constraints and
          \item the uniform word problem for automaton groups with rational constraints.
        \end{itemize}
        Furthermore, the versions without rational constraints can be decided in non-deterministic space $\O((n + m) \log \left| \mathcal{T} \right|)$ and are, thus, also in non-deterministic linear space.
      \end{prop}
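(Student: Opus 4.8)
The plan is to decide, for each of the three problems, the \emph{complementary} problem by a non-deterministic search for a witness $u$ on which $\bm{q} = q_n q_{n-1} \dots q_1$ and $\bm{p} = p_m p_{m-1} \dots p_1$ \emph{disagree}, and then to invoke the Immerman--Szelepcsényi theorem (non-deterministic space above $\log$ is closed under complement) to place the word problem itself in the same class. Note that Savitch's Theorem \cite{savitch1970relationships} would only yield a \emph{deterministic quadratic}-space algorithm, so complementation rather than determinization is what keeps us in non-deterministic \emph{linear} space.

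The key point is that the action of an element on a word can be computed letter by letter as a cascade, so $u$ need never be stored. To evaluate $\bm{q} \circ{}$ on one input letter, keep a \emph{state vector} $(s_1, \dots, s_n) \in (Q \cup \inverse{Q})^n$, initialized to $(q_1, \dots, q_n)$: the input letter is fed into $s_1$ (the innermost state, since $\bm{q} \circ u = q_n \circ (\cdots (q_1 \circ u)))$), its output is fed into $s_2$, and so on, each $s_i$ being updated to its successor via the (deterministic) transition relation of $\mathcal{T}$ or $\inverse{\mathcal{T}}$; the letter leaving $s_n$ is the output. Because the automaton may be partial, a required transition in the cascade may be missing; we record this by an \emph{alive} bit for $\bm{q}$, set to false as soon as a transition is absent and staying false thereafter, since $\dom(\bm{q} \circ{})$ is prefix-closed. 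We run such a cascade for $\bm{q}$ and for $\bm{p}$ in parallel and keep one further bit $\mathrm{DIFF}$, set whenever, at a position where both are alive, the two output letters differ. The machine guesses $u$ one letter at a time, updating the two state vectors, the two alive bits and $\mathrm{DIFF}$, and — in the constrained versions — guessing, for each acceptor $\mathcal{A}_k$, a single successor state along a run. After the last guessed letter it accepts iff exactly one alive bit is set, or both are set and $\mathrm{DIFF}$ is set, and, in the constrained case, additionally all $\mathcal{A}_k$ are in final states.

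For correctness, recall that two partial maps are equal iff on every $u$ they are both undefined or both defined and equal; hence $u$ witnesses $\bm{q} \neq \bm{p}$ exactly when the alive bits differ at the end (one defined, one not) or both are alive and some output position differed, which is precisely the acceptance condition. The working space is $\O(n \log|Q|)$ and $\O(m \log|Q|)$ for the two state vectors, a constant number of bits, and — crucially — only $\O(\log R)$ per acceptor, because we track a \emph{single} current state rather than the full reachable subset; this sums to $\O((n + m)\log|\mathcal{T}| + r \log R)$, which is at least logarithmic in the input, so Immerman--Szelepcsényi applies and yields the word problem in the same non-deterministic space. Omitting the acceptor bookkeeping and the final test on the $\mathcal{A}_k$ handles the versions without rational constraints and removes the $r \log R$ term; the group case is the special case in which all maps are total, so the alive bits are constantly true.

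The main obstacle is getting the partial-map bookkeeping exactly right: definedness (the alive bits) must be tracked separately from output agreement and then combined correctly in the acceptance test, since in the semigroup and automaton-inverse-semigroup cases a witness of inequality can arise purely from a domain mismatch, with no position at which both maps emit differing letters. A secondary point needing care is the acceptor handling: maintaining the full reachable subset of each $\mathcal{A}_k$ would cost $\O(r R)$ space, so to meet the $\O(r \log R)$ bound we instead guess one accepting run per acceptor, which is sound precisely because $u \in L(\mathcal{A}_k)$ is an existential reachability condition.
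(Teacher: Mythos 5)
Your proposal is correct and follows essentially the same route as the paper's proof: decide the \emph{complement} by guessing a witness $u$ letter by letter, cascade each guessed letter through the $n+m$ stored states of $\mathcal{T}$, track one non-deterministically guessed run per acceptor (which is exactly what gives the $\O(r \log R)$ term instead of $\O(rR)$), and appeal to closure of non-deterministic space under complementation (Immerman--Szelepcs\'enyi) rather than Savitch. Where you differ is in the bookkeeping, and your version is in fact the more careful one. The paper's pseudo-code simply compares the two output letters produced at the \emph{current} position and accepts if they differ while all acceptors are in final states; definedness of the partial maps is left implicit. Your alive bits and latched $\mathrm{DIFF}$ bit explicitly cover two situations that this glosses over in the constrained versions: a witness $u$ in the constraint language on which exactly one of $\bm{q} \circ{}$ and $\bm{p} \circ{}$ is defined (a pure domain mismatch, with no position at which both emit differing letters), and a witness on which the outputs differ only at an interior position while agreeing at the last letter --- here one cannot simply truncate $u$ to the position of disagreement, as the shorter prefix may violate the rational constraint, so a latch is genuinely needed. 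For the unconstrained problems the two formulations coincide (one may always truncate the witness), but for the constrained versions your extra bookkeeping is what makes the correctness argument fully rigorous.
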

      \begin{proof}
        We give an algorithm to decide the uniform word problem for automaton semigroups with rational constraints (as stated above) since this is the most general case. In fact, we give an algorithm for the problem's complement. As non-deterministic space classes are closed under complementation \cite{immerman1988nondeterministic, szelepcsenyi1988method}, this proves the proposition.

        For the input $\Sigma$-acceptors $\mathcal{A}_1, \mathcal{A}_2, \dots, \mathcal{A}_r$, let $\mathcal{A}_k = (Z_k, \Sigma, \delta_k, I_k, F_k)$. The algorithm uses the following variables.
        {\makeatletter% Prevent page break before the next itemize
          \@beginparpenalty=0%
         \makeatother%
        \begin{itemize}\nopagebreak
          \item For each input $\Sigma$-acceptor $\mathcal{A}_k$, we need a variable $z_k$ that stores a state from $Z_k$. The initial value of these variables may be undefined.

          Storing a single state requires space $\O(\log R)$; storing all of them, thus, requires space $\O(r \log R)$.

          \item We also need $n + m$ variables to store states of the input automaton $\mathcal{T}$. We re-use the names $q_1, q_2, \dots, q_n, p_1, p_2, \dots, p_m$ for this, their initial values shall be the respective input values.

          Storing a single of these states requires $\O(\log \left| \mathcal{T} \right|)$; storing all of them, thus, requires space $\O((n + m) \log \left| \mathcal{T} \right|)$.

          \item Finally, we need four variables $a$, $a'$, $b$ and $b'$ to store letters from $\Sigma$. Their space requirement is constant\footnote{or in $\O(\log |\Sigma|)$, depending on one's point of view. Either way, the algorithm is in non-deterministic linear space in the end.}.
        \end{itemize}}

        \afterpage{%
          \begin{algorithm}[t]
            \caption{Deciding the uniform word problem for automaton semigroups with rational constraints in $\NSPACE(n)$}\label{alg:nlognAlg}
            \begin{algorithmic}
              \ForAll{$i \in \{ 1, 2, \dots, r \}$}
                \State $z_i \gets \Call{guess}{I_i}$ \Comment{Non-deterministically choose an initial state for each\\\hfill acceptor.}
              \EndFor
              \While{\sourcekeyword{true}}
                \State $a \gets \Call{guess}{\Sigma}$ \Comment{Guess the witness's next letter $a_0 \in \Sigma$.}
                \State $b \gets a$
                \ForAll{$k \in \{ 1, 2, \dots, r \}$}
                  \State $z_k \gets \Call{guess}{z_k \cdot a}$ \Comment{Guess an accepting run in each acceptor.}
                \EndFor
                \ForAll{$i \in \{ 1, 2, \dots, n \}$}
                  \State $a' \gets q_i \circ a$
                  \State $q_i \gets q_i \cdot a$
                  \State $a \gets a'$
                \EndFor \Comment{Here, we have $a = q_n q_{n - 1} \dots q_1 \circ a_0$.}
                \ForAll{$j \in \{ 1, 2, \dots, m \}$}
                  \State $b' \gets p_j \circ b$
                  \State $p_j \gets p_j \cdot b$
                  \State $b \gets b'$
                \EndFor \Comment{Here, we have $b = p_m p_{m - 1} \dots p_1 \circ a_0$.}
                \If{$a \neq b$ \sourcekeyword{and} $\forall k \in \{ 1, 2, \dots, r \}: z_k \in F_k$}
                  \State \Return{\enquote{$\neq$}} \Comment{We have found a witness.}
                \EndIf
              \EndWhile
            \end{algorithmic}
          \end{algorithm}%
        }

        The general idea of the algorithm is now as follows. We guess a word $u \in \Sigma^*$ letter by letter (we only store the last guessed letter of the word). The guessed word should be a witness to prove the inequality of $q_n q_{n - 1} \dots q_1 \circ{}\!$ and $p_m p_{m - 1} \dots p_1 \circ{}\!$ (i.\,e.\ we want to have $q_n q_{n - 1} \dots q_1 \circ u \neq p_m p_{m - 1} \dots p_1 \circ u$) and it should be in $\bigcap_{i = 1}^r L(\mathcal{A}_i)$. To test the latter, we store, for all acceptors, one of the states in which the acceptor can be after reading the current prefix of $u$; then, we can test whether that state is final, in which case the prefix is in the acceptor's language. To test the former, we store the $n + m$ states of the automaton $\mathcal{T}$ in which we are after reading the current prefix of $u$. After guessing the next letter $a \in \Sigma$, we compute successively $a_i = q_i \circ a_{i - 1}$ and $q_i \cdot a_{i - 1}$ as the new value of $q_i$ where $a_0 = a$ (without storing the intermediate results). We do the same for $p_1, p_2, \dots, p_m$ and obtain the letter $b_m$ in the end. Then, we compare $a_n$ and $b_m$. If they are unequal (and the acceptors accept), then we have found the witness and can return \enquote{$\neq$}. The algorithm's pseudo-code may be found in \autoref{alg:nlognAlg}. Its correctness is easy to see: if we have inequality, then the algorithm guesses a witness, and, if it guesses a witness, then we have inequality.

        Handling automaton-inverse semigroups and groups only requires storing an additional bit for each state in the automaton that indicates whether we are in a state $q$ or in a state $\inverse{q}$; the rest of the algorithm is structurally equal. The algorithms for the problems without rational constraints do not require the additional book-keeping for the acceptors and are, thus, even simpler (one can simply remove the corresponding commands from the algorithm).
      \end{proof}

      There is an interesting observation concerning the last proposition and its proof: to show that the uniform word problem for automaton groups is in non-deterministic linear space, we nowhere exploit group properties. In fact, we only give an algorithm for automaton semigroups that can also be applied to groups. Interestingly, we will see later (in \autoref{prop:uniformPSPACEhard}) that the uniform word problem for automaton semigroups (and automaton-inverse semigroups) is $\PSPACE$-hard. Therefore, the presented simple algorithm -- in the sense of complexity -- is as efficient as possible. However, it is also possible that there is a more efficient algorithm for the uniform word problem for automaton groups without rational constraints (making use of their group structure).

      The algorithm from the previous proposition's proof can be adapted to decide the word problem for a fixed automaton structure. Here, the automaton $\mathcal{T}$ is not considered part of the input. Therefore, we can store a single state of the automaton in constant space and, as a simple consequence, get the following proposition.
      \begin{prop}\label{prop:nonuniformUpperBound}
        These problems can be decided by a non-deterministic Turing Machine in space $\O(n + m + r \log R)$ and are, thus, in non-deterministic linear space:
        \begin{itemize}[noitemsep]
          \item the word problem with rational constraints for any automaton semigroup,
          \item the word problem with rational constraints for any automaton-inverse semigroup and
          \item the word problem with rational constraints for any automaton group.
        \end{itemize}
        Furthermore, the versions without rational constraints can be decided in non-deterministic space $\O(n + m)$ and are, thus, also in non-deterministic linear space.
      \end{prop}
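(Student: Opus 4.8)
The plan is to reuse the algorithm from the proof of \autoref{prop:uniformUpperBound} (i.\,e.\ \autoref{alg:nlognAlg}) essentially verbatim, as an algorithm for the complement problem, and simply re-analyze its space consumption under the new assumption that the automaton $\mathcal{T}$ is a \emph{constant} of the problem rather than part of the input. Since non-deterministic space classes are closed under complementation, establishing the stated space bound for the complement suffices.

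The only substantive point is the re-analysis of the storage cost for the $n+m$ state variables. When $\mathcal{T} = (Q, \Sigma, \delta)$ is fixed, its size $\left| \mathcal{T} \right| = |Q|$ is a constant; hence storing a single state of $\mathcal{T}$ requires only $\O(1)$ space (in place of the $\O(\log \left| \mathcal{T} \right|)$ from before), and the variables $q_1, q_2, \dots, q_n, p_1, p_2, \dots, p_m$ together require only $\O(n + m)$ space. Moreover, because $\delta$ is hardwired into the machine, each transition evaluation — computing $q_i \circ a$ and $q_i \cdot a$, and likewise for the $p_j$ — is carried out in constant space via a fixed lookup table. The bookkeeping for the rational constraints is entirely unchanged: the acceptors $\mathcal{A}_1, \mathcal{A}_2, \dots, \mathcal{A}_r$ remain part of the input, so each state variable $z_k$ still costs $\O(\log R)$, for a total of $\O(r \log R)$, while the letter variables $a, a', b, b'$ contribute only constant space. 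Summing yields the claimed bound $\O(n + m + r \log R)$.

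For the versions without rational constraints, I would drop the acceptor bookkeeping altogether (exactly as in the proof of \autoref{prop:uniformUpperBound}), leaving space $\O(n + m)$. The extension to automaton-inverse semigroups and automaton groups requires only the same single extra bit per state recording whether one sits in a state $q$ or in $\inverse{q}$, which does not affect the asymptotics. I do not expect any genuine obstacle here: the proposition is an immediate re-reading of the earlier analysis with $\left| \mathcal{T} \right|$ treated as a constant, and the only step deserving a word of care is the explicit remark that fixing $\mathcal{T}$ removes the $\log \left| \mathcal{T} \right|$ factor from the per-state cost and makes transition evaluation constant-space.
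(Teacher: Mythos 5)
Your proposal is correct and matches the paper's own argument: the paper likewise obtains this proposition by observing that once $\mathcal{T}$ is fixed, a single state can be stored in constant space, so the algorithm of \autoref{prop:uniformUpperBound} runs in space $\O(n + m + r \log R)$. The extra details you spell out (constant-space transition lookup, unchanged acceptor bookkeeping, closure of non-deterministic space under complementation) are exactly the implicit content of the paper's brief remark.
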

    \end{subsection}

    \begin{subsection}{Lower Bounds}
      In the last section, we gave a \emph{non-deterministic} algorithm to decide many variations of the word problem for automaton structures. As the algorithms only require linear space, they admit \emph{deterministic} exponential time algorithms (see e.\,g.\ \cite{Pap94}). While it is a less precise approach from a complexity viewpoint, such deterministic algorithms are often stated directly by using the fact that one can give an upper bound for the length of a word on which two distinct elements of an automaton structure act differently. This path was also taken by Cain in \cite{cain2009automaton}, where he asks whether there is a better upper bound on the word length than $|Q|^{n + m}$ in the case of automaton semigroups (the notation used here is the same as in the definitions of the word problems above). In the next proposition, we handle this question by giving lower bounds on the length.
      \begin{prop}\label{prop:lowerBoundOnWordLength}
        Let $\mathcal{D} = (Q, \{ a, b \}, \delta)$ be the $\mathscr{S}$-automaton\footnote{We want to remark that the presented $\mathscr{S}$-automaton is the so-called \emph{dual} of the adding machine depicted in \autoref{expl:automatonForZ}. For the construction of the dual, states and letters change roles, just like the operations $\circ$ and $\cdot$.}
        \begin{center}
          \begin{tikzpicture}[auto, shorten >=1pt, >=latex, baseline]
            \node[state] (0) {$0$};
            \node[state, right=of 0] (1) {$1$};

            \path[->] (0) edge[loop left] node {$b/b$} (0)
                          edge[bend left] node {$a/b$} (1)
                      (1) edge[loop right] node {$b/b$} (1)
                          edge[bend left] node {$a/a$} (0);
          \end{tikzpicture}
        \end{center}
        and let $n > 0$. Then, the elements $0^n$ and $0^{n - 1}$ of $\mathscr{S}(\mathcal{D})$ are distinct but $0^n \circ u = 0^{n - 1} \circ u$ holds for all $u \in \{ a, b \}^*$ of length smaller than $2^{n - 1} = |Q|^{\left\lfloor \frac{|0^n| + |0^{n - 1}|}{2} \right\rfloor}$.

        Let $\mathcal{D}' = (Q', \{ a, b \}, \delta')$ be the $\mathscr{S}$-automaton obtained from $\mathcal{D}$ by adding the state
        \begin{center}
          \begin{tikzpicture}[auto, shorten >=1pt, >=latex, baseline]
            \node[state] (q) {$q$};
            \path[->] (q) edge[loop right] node[align=center] {$a/b$\\$b/b$} (q);
          \end{tikzpicture}.
        \end{center}
        Then, $0^{n - 1}$ and $q$ are distinct elements of $\mathscr{S}(\mathcal{D}')$ but $0^{n - 1} \circ u = q \circ u$ holds for all $u \in \{ a, b \}^*$ of length smaller than $2^{n - 1} = (|Q'| - 1)^{|0^{n - 1}| + |q| - 1}$.
      \end{prop}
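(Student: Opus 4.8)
The plan is to obtain an explicit description of the transduction $0^n \circ{}$ and then read off both claims from it. First I would analyze a single application of the state $0$: reading a word from left to right, the automaton starts in state $0$, an input letter $a$ flips the current state while $b$ leaves it unchanged, and the output is $b$ precisely when an $a$ is read in state $0$, whereas an $a$ read in state $1$ is output as $a$ and every $b$ is output as $b$. Consequently, if we enumerate the occurrences of $a$ in $u$ from left to right, the state $0$ keeps exactly the even-numbered occurrences (the $2$nd, $4$th, $\dots$), rewrites every odd-numbered occurrence (the $1$st, $3$rd, $\dots$) as $b$, and leaves every $b$ untouched. In particular $0 \circ{}$ is length-preserving and maps $b$'s to $b$'s, so iterating it never introduces new $a$'s.

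The central step is the claim that, for every $n \geq 0$, the word $0^n \circ u$ is obtained from $u$ by keeping an $a$ exactly at those occurrences whose index in the left-to-right enumeration of the $a$'s of $u$ is divisible by $2^n$, and replacing every other $a$ by $b$. I would prove this by induction on $n$, the base case $n = 0$ being the identity. For the inductive step, the surviving $a$'s of $0^n \circ u$ are, in left-to-right order, the $(2^n)$-th, $(2 \cdot 2^n)$-th, $(3 \cdot 2^n)$-th, $\dots$ occurrences of $a$ in $u$; applying $0$ once more keeps the even-numbered ones among \emph{these}, that is, the occurrences with original index $2 \cdot 2^n, 4 \cdot 2^n, \dots$, which are exactly the multiples of $2^{n+1}$.

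From this description both statements about $\mathcal{D}$ follow by short counting arguments. If $|u| < 2^{n - 1}$, then $u$ contains fewer than $2^{n-1}$ letters $a$, so no occurrence has index divisible by $2^{n-1}$ or by $2^n$; hence both $0^{n-1} \circ u$ and $0^n \circ u$ equal $b^{|u|}$ and agree. On the other hand, for the witness $u = a^{2^{n-1}}$ the last (i.e.\ $2^{n-1}$-th) $a$ has index divisible by $2^{n-1}$ but not by $2^n$, so $0^{n-1} \circ a^{2^{n-1}} = b^{2^{n-1} - 1} a$ while $0^n \circ a^{2^{n-1}} = b^{2^{n-1}}$; thus $0^n$ and $0^{n-1}$ are distinct and the bound $2^{n-1}$ is sharp. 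Since $|Q| = 2$ and $\lfloor (n + (n-1))/2 \rfloor = n - 1$, this matches the stated $|Q|^{\lfloor (|0^n| + |0^{n-1}|)/2 \rfloor}$.

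For $\mathcal{D}'$ the added state satisfies $q \circ u = b^{|u|}$ for every $u$, so $0^{n-1} \circ u = q \circ u$ holds iff $0^{n-1} \circ u$ is the all-$b$ word, i.e.\ iff $u$ has fewer than $2^{n-1}$ letters $a$; the same two computations as above (agreement for $|u| < 2^{n-1}$, disagreement on $a^{2^{n-1}}$) then give the second claim, and with $|Q'| - 1 = 2$ and $|0^{n-1}| + |q| - 1 = n - 1$ this is exactly $(|Q'| - 1)^{|0^{n-1}| + |q| - 1}$. The only genuine work is the inductive description of $0^n \circ{}$: getting the bookkeeping right — that repeatedly applying $0$ halves the surviving $a$'s and that the survivors are precisely the occurrences whose index is a multiple of $2^n$ — is where care is needed, whereas both counting consequences are then immediate.
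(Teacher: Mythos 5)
Your proof is correct, and it reaches the same witnesses (agreement on all words shorter than $2^{n-1}$, disagreement on $a^{2^{n-1}}$) by a genuinely different key lemma. The paper exploits the fact that $\mathcal{D}$ is the dual of the adding machine: it tracks the evolution of the \emph{state sequence} under the $\cdot$ action, showing that $\revbin_k(i) \cdot a = \revbin_k(i+1)$ and $\revbin_k(i) \circ a = b$ as long as the $k$-bit counter does not overflow, so that $0^k$ acts as $b^{|u|}$ until the $2^k$-th letter $a$ arrives. You never touch the $\cdot$ action; instead you characterize the \emph{output word} of $0^n$ directly -- the surviving $a$'s are exactly those whose occurrence index is divisible by $2^n$ -- and prove it by induction on $n$, i.e.\ on the number of composed copies of the single map $0 \circ {}$, rather than by induction along the input word. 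The two lemmas are equivalent (the $j$-th $a$ survives iff the counter reads all ones when it is consumed, iff $2^n \mid j$), but your formulation is more self-contained: it needs no counter correspondence and makes both counting consequences one-line observations, while the paper's formulation explains the structural ``dual of the adding machine'' connection and describes the action of an arbitrary state sequence $\revbin_k(i)$, not just of $0^k$. A minor point in your favor: your computation $0^{n-1} \circ a^{2^{n-1}} = b^{2^{n-1}-1} a \neq b^{2^{n-1}} = 0^n \circ a^{2^{n-1}}$ is the correct one; the paper's final displayed equation has the roles of $a$ and $b$ inadvertently swapped (and a stray $b^{2^{n-1}}$ as the argument of $0^n \circ {}$), so your write-up actually fixes a typo in the original.
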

      \begin{proof}
        For a natural number $i \in \mathbb{N}$, let $\revbin_k(i)$ denote $i$'s binary representation of length $k$ in reverse (i.\,e.\ with least significant bit first). One may verify that, for $0 \leq i < 2^k - 1$, we have
        \begin{align*}
          \revbin_k(i) \circ a &= b \text{,}& \revbin_k(i) \cdot a &= \revbin_k(i + 1) \text{,}\\
          \revbin_k(i) \circ b &= b \text{ and}& \revbin_k(i) \cdot b &= \revbin_k(i) \text{.}
        \end{align*}
        In other words: an input of $a$ is adding one to the binary representation and outputs $b$ while an input of $b$ does not change anything and outputs $b$ again. This shows $0^n \circ u = b^{|u|} = 0^{n - 1} \circ u = q \circ u$ for all $u \in \{ a, b \}^*$ of length up to $2^{n - 1} - 1$. Additionally, we have $0^{n - 1} \cdot a^{2^{n - 1} - 1} = \revbin_{n - 1}(2^{n - 1} - 1) = 1^{n - 1}$ and $0^{n} \cdot a^{2^{n - 1} - 1} = \revbin_n(2^{n - 1} - 1) = 0 1^{n - 1}$, which shows $0^{n - 1} \circ a^{2^{n - 1}} = a^{2^{n - 1} - 1} b \neq a^{2^{n - 1}} = 0^n \circ b^{2^{n - 1}} = q \circ a^{2^{n - 1}}$.
      \end{proof}

      In some sense, the previous proposition gives a lower bound for the running time of the often stated algorithm for the word problem that enumerates all words up to a certain length and tests whether the input elements act differently on them. In the next proposition, we show a general, algorithm-agnostic lower bound by showing $\PSPACE$-hardness for some versions of the uniform problem. Please note that these results also follow from \autoref{prop:nonuniformPSPACEhardness} below. However, the encoding technique used in the next proof is much simpler and more direct. Therefore, it is interesting in its own right and we feel that it deserves to be mentioned explicitly.
      \begin{prop}\label{prop:uniformPSPACEhard}
        These problems are $\PSPACE$-hard:
        \begin{itemize}[noitemsep]
          \item the uniform word problem for automaton semigroups (with or without rational constraints),
          \item the uniform word problem for automaton-inverse semigroups (with or without rational constraints),
          \item the uniform word problem for automaton groups with a single rational constraints.
        \end{itemize}
      \end{prop}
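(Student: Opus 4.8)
The plan is to reduce from the \emph{intersection nonemptiness problem} for deterministic finite automata: given complete DFAs $\mathcal{A}_1, \dots, \mathcal{A}_r$ over a common alphabet $\Sigma$, decide whether $\bigcap_{i=1}^r L(\mathcal{A}_i) \neq \emptyset$. This problem is well known to be $\PSPACE$-complete; hence, as $\PSPACE$ is closed under complementation, intersection \emph{emptiness} is $\PSPACE$-complete as well, and it is this latter problem that I reduce from. All reductions will be log-space and engineered so that the two constructed elements act identically on every admissible input word exactly when the intersection is empty; thus a ``yes'' instance of intersection emptiness corresponds to a ``yes'' instance of the word problem (equality of the two elements). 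Note also that, for the semigroup and inverse-semigroup cases, it suffices to treat the versions \emph{without} rational constraints: the versions with constraints are at least as hard, since a without-constraints instance becomes a with-constraints instance by adjoining a single acceptor recognising $\Sigma^*$.

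For the uniform word problem for automaton semigroups, the core construction simulates all the $\mathcal{A}_i$ \enquote{dually} inside a single $\mathscr{S}$-automaton $\mathcal{T}$. Working over the alphabet $\Sigma \sqcup \{ \#, \bot \}$ with two fresh symbols $\#$ (an end marker) and $\bot$ (a failure symbol), I take as states the disjoint union of the state sets of the $\mathcal{A}_i$ together with an identity state $t$ and a comparison state $c$. Each DFA state $z$ of $\mathcal{A}_i$ copies every $a \in \Sigma$ to the output while following the transition of $\mathcal{A}_i$; on reading $\#$ it emits $\#$ if $z$ is accepting and $\bot$ otherwise, and on reading $\bot$ it emits $\bot$; in all of these cases it moves to $t$, which simply copies every letter. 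The comparison state $c$ copies $\Sigma$ but always emits $\bot$ on $\#$ before moving to $t$. Setting $\boldsymbol{q} = s_r s_{r-1} \cdots s_1$ (the product of the start states) and $\boldsymbol{p} = c$, a short cascade analysis shows that, on an input $u \# v$ with $u \in \Sigma^*$, the element $\boldsymbol{q}$ outputs $\#$ at the marker position precisely when every $\mathcal{A}_i$ accepts $u$, while $\boldsymbol{p}$ always outputs $\bot$ there; before the marker both merely copy, and after it both act as the identity $t$. Hence $\boldsymbol{q}$ and $\boldsymbol{p}$ differ on some word if and only if $\bigcap_i L(\mathcal{A}_i) \neq \emptyset$, and the reduction is clearly log-space computable.

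To handle automaton-inverse semigroups, the same idea must be realised by an \emph{inverse-deterministic} automaton, and this is where I expect the main difficulty. The naive copy-and-flag automaton above is not inverse-deterministic: from a rejecting DFA state both the inputs $\#$ and $\bot$ produce the same output $\bot$, so the induced maps fail to be injective. The plan is to repair this by making the marker phase output-injective, for instance by using distinct failure markers that carry enough information to undo each step, and by defining the maps only on well-formed inputs so that each $\boldsymbol{q} \circ{}$ becomes a genuine partial bijection. Partiality is in fact helpful here: since inverse-semigroup elements are partial maps, two such maps already differ as soon as one is defined and the other is not, so I can let both $\boldsymbol{q}$ and $\boldsymbol{p}$ be undefined on malformed inputs and thereby avoid spurious witnesses for free.

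The group case is where the single rational constraint enters. A $\mathscr{G}$-automaton must be complete, so the induced maps are total and I can no longer exploit partiality to suppress malformed inputs; after completing the invertible automaton, the resulting permutations may well differ on ill-formed words. The remedy is to add one rational constraint whose language is the set of admissible inputs (for example $\Sigma^* \#$), which is a fixed acceptor of constant size, independent of $r$. Crucially, the intersection of the $\mathcal{A}_i$ is still carried out by the \emph{action} of the element $\boldsymbol{q}$ through the cascade, and not by the constraint, which is why a single constraint suffices to obtain $\PSPACE$-hardness. The principal obstacle throughout is therefore the invertibility requirement: turning the transparent but many-to-one semigroup simulation into an inverse-deterministic (and, for groups, complete) automaton while preserving the \enquote{all accept} detection at the marker.
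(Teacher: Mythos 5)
Your choice of reduction source (Kozen's DFA intersection emptiness problem) is exactly the paper's, and your construction for the first bullet point is essentially sound: cascading identity-acting copies of the $\mathcal{A}_i$ that propagate a failure symbol $\bot$ at the marker, against the always-failing comparator $c$, gives a correct log-space reduction for the plain semigroup case. (One small bug: you never define $c$ on the letter $\bot$. If that transition is absent, $\boldsymbol{p} = c$ is undefined on the word $\bot$ while $\boldsymbol{q}$ is defined on it, so the two elements differ on every instance; adding a transition from $c$ reading $\bot$, outputting $\bot$, into $t$ repairs this.)

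The genuine gap is that the second and third bullets --- the automaton-inverse semigroup and automaton group cases, which are the substance of the proposition --- are not proved but only announced as a plan. You correctly diagnose the obstacle (the transitions $\#/\bot$ and $\bot/\bot$ from a rejecting state collide in output, killing inverse-determinism), but your proposed repair, \enquote{distinct failure markers that carry enough information to undo each step}, is never carried out, and as sketched it is incompatible with your comparison element: if each $\mathcal{A}_i$'s rejecting states emit their own marker $\bot_i$, then the cascade's output at the marker position is $\bot_{i_0}$ where $i_0$ is the \emph{first rejecting index}, which varies with the input word, so a single state $\boldsymbol{p} = c$ cannot match it; both elements would have to share the cascade and differ only in a final state (something like $\boldsymbol{q} = c\, s_r \cdots s_1$ versus $\boldsymbol{p} = d\, s_r \cdots s_1$), a restructuring you never make. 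The paper's resolution is precisely the idea your proposal is missing: the input format reserves $r$ flag positions (words of shape $\{0,1\}^{*} \# 1^r \# 1$), each acceptor's run flips \emph{its own} reserved bit from $1$ to $0$ upon rejection --- an injective operation, unlike emitting $\bot$ --- and then $c$ flips the trailing bit conditionally while $d$ flips it unconditionally; malformed words are suppressed by partiality, and for the group case the completion transitions are neutralized by a rational constraint enforcing exactly this input format. Note also that this constraint, and the acceptor recognizing it, grows with $r$; your claim that a constant-size constraint such as $\Sigma^*\#$ suffices cannot be assessed, because the completed invertible automaton it is supposed to protect was never constructed. Until the invertible simulation is actually built and verified, two of the three claims remain unproven.
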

      \begin{proof}
        From a classical result of Kozen \cite{kozen77lower} follows $\PSPACE$-hardness of the following problem.
        \problem
          [$\Sigma = \{ 0, 1 \}$]
          { $r \in \mathbb{N}$,
            complete and deterministic $\Sigma$-acceptors $\mathcal{A}_1, \mathcal{A}_2, \dots, \mathcal{A}_r$ of size at most $R$}
          {is $\bigcap_{k = 1}^r L(\mathcal{A}_k)$ empty?}
        We describe how to map an instance $\mathcal{A}_1, \mathcal{A}_2, \dots, \mathcal{A}_r$ of the previous problem to an instance of the uniform word problem for automaton-inverse semigroups (without rational constraints). Let $\mathcal{A}_k = (Z_k, \Sigma, \delta_k, \{ z_{0, k} \}, F_k)$ for $k = 1, 2, \dots, r$. Without loss of generality, we assume all $Z_k$ to be disjoint. For each $k \in \{ 1, 2, \dots, r \}$, let $\mathcal{T}_k'$ be the automaton over the alphabet $\Gamma = \{ 0, 1, \# \}$ given by the graphical representation in \autoref{fig:Tkp} (without the dashed transitions).
        \begin{figure}[h]\caption{a schematic representation of $\mathcal{T}_k'$ and the transitions to make it complete}\label{fig:Tkp}
          \begin{center}
            \resizebox{\linewidth}{!}{
              \begin{tikzpicture}[auto, shorten >=1pt, >=latex]
                \node[state] (0) {$0$};
                \node[state, right=of 0] (1) {$1$};
                \node[state, right=of 1] (2) {$2$};
                \node[right=of 2] (d) {$\dots$};
                \node[state, right=0cm of d] (k-1) {$k - 1$};
                \node[state, right=of k-1] (k) {$k$};
                \node[state, right=of k] (k+1) {$k + 1$};
                \node[state, right=of k+1] (k+2) {$k + 2$};
                \node[right=of k+2] (d2) {$\dots$};
                \node[state, right=0cm of d2] (r) {$r$};
                \node[state, below=of r] (r+1) {};
                \node[state, left=of r+1] (r+2) {};

                \path[->] (0) edge node[below, align=left] {$0/0$\\$1/1$} (1)
                          (1) edge node[below, align=left] {$0/0$\\$1/1$} (2)
                          (2) edge node[below, align=left] {$0/0$\\$1/1$} (d)
                          (k-1) edge node[below, align=left] {$1/0$} (k)
                          (k) edge node[below, align=left] {$0/0$\\$1/1$} (k+1)
                          (k+1) edge node[below, align=left] {$0/0$\\$1/1$} (k+2)
                          (k+2) edge node[below, align=left] {$0/0$\\$1/1$} (d2)
                          (r) edge node[left] {$\#/\#$} (r+1)
                          (r+1) edge node[above] {$1/1$} (r+2)

                          (0) edge[bend left, dashed] node {$\#/\#$} (1)
                          (1) edge[bend left, dashed] node {$\#/\#$} (2)
                          (2) edge[bend left, dashed] node[above] {$\#/\#$} (d)
                          (k-1) edge[bend left, dashed] node[above, align=center] {$0/1$\\$\#/\#$} (k)
                          (k) edge[bend left, dashed] node[above] {$\#/\#$} (k+1)
                          (k+1) edge[bend left, dashed] node[above] {$\#/\#$} (k+2)
                          (k+2) edge[bend left, dashed] node[above] {$\#/\#$} (d2)
                          (r) edge[bend left, dashed] node[right, align=center] {$0/0$\\$1/1$} (r+1)
                          (r+1) edge[bend left, dashed] node[below, align=center] {$0/0$\\$\#/\#$} (r+2)
                          (r+2) edge[loop left, dashed] node[align=center] {$0/0$\\$1/1$\\$\#/\#$} (r+2)
                ;
              \end{tikzpicture}}
          \end{center}
        \end{figure}
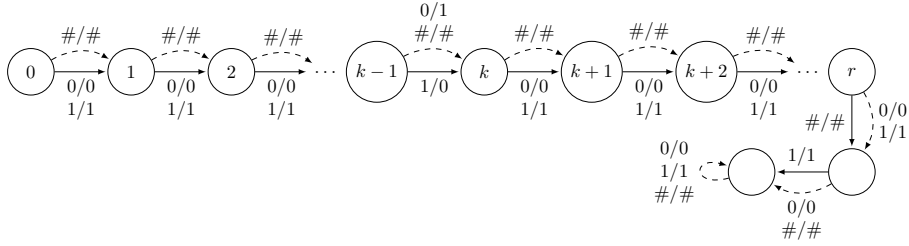
        Note that $\mathcal{T}_k'$ is an $\inverse{\mathscr{S}}$-automaton. The idea of the construction is to ensure that an input word is (a prefix of a word) of the form $\{ 0, 1 \}^{k - 1} 1 \{ 0, 1 \}^{r - k} \# 1$, that is a block of length $r$ consisting of $0$s and $1$s, where the $k^{\textnormal{th}}$ letter is $1$, followed by $\# 1$. This $k^{\textnormal{th}}$ letter is switched from $1$ to $0$ while all other letters are left unchanged.

        Besides $\mathcal{T}_k'$, we also define $\mathcal{T}_k''$, a disjoint copy of $\mathcal{T}_k'$ where the transition $\trans{(k - 1)}{1}{0}{k}$ is changed to $\trans{(k - 1)}{1}{1}{k}$, meaning that the $1$ at position $k$ is checked but not changed. Furthermore, we define $\mathcal{T}_k$ to be the automaton obtained from the acceptor $\mathcal{A}_k$ by changing all transitions $\transa{z}{a}{z'}$ into transitions $\trans{z}{a}{a}{z'}$, that is we add to each transition an output which is equal to the input. Note that the resulting automata are $\inverse{\mathscr{S}}$-automata; in fact, they are $\mathscr{G}$-automata.

        Finally, we define $\mathcal{T}'$ to be given by the graphical representation in \autoref{fig:Tp} (without the dashed transitions). Clearly, it is an $\inverse{\mathscr{S}}$-automaton.
        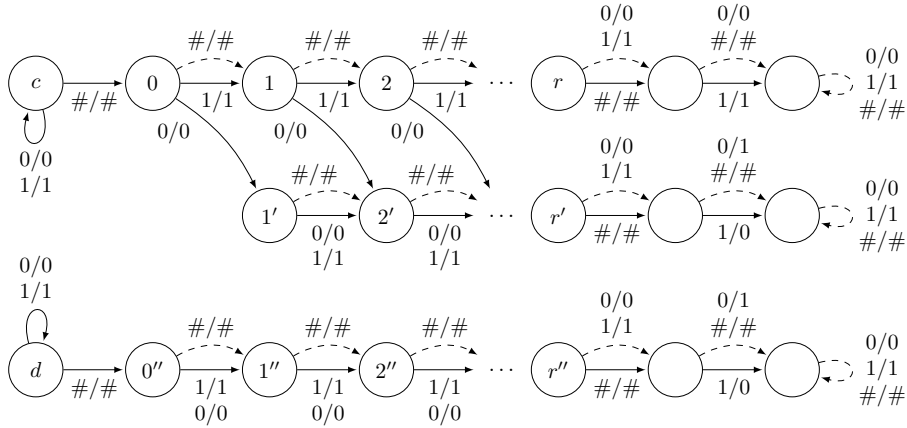
\begin{figure}[b!]\caption{a graphical representation of $\mathcal{T}'$ and the transitions to make it complete}\label{fig:Tp}
          \begin{center}
            \resizebox{\linewidth}{!}{
              \begin{tikzpicture}[auto, shorten >=1pt, >=latex]
                \node[state] (c) {$c$};
                \node[state, right=of c] (0) {$0$};
                \node[state, right=of 0] (1) {$1$};
                \node[state, right=of 1] (2) {$2$};
                \node[state, draw=none, right=of 2] (d) {$\dots$};
                \node[state, right=0cm of d] (r) {$r$};
                \node[state, right=of r] (r+1) {};
                \node[state, right=of r+1] (r+2) {};
                \node[state, below=1.25cm of 1] (1p) {$1'$};
                \node[state, right=of 1p] (2p) {$2'$};
                \node[state, draw=none, right=of 2p] (dp) {$\dots$};
                \node[state, right=0cm of dp] (rp) {$r'$};
                \node[state, right=of rp] (r+1p) {};
                \node[state, right=of r+1p] (r+2p) {};
                \node[state, below=3.75cm of c] (z) {$d$};
                \node[state, right=of z] (0pp) {$0''$};
                \node[state, right=of 0pp] (1pp) {$1''$};
                \node[state, right=of 1pp] (2pp) {$2''$};
                \node[state, draw=none, right=of 2pp] (dpp) {$\dots$};
                \node[state, right=0cm of dpp] (rpp) {$r''$};
                \node[state, right=of rpp] (r+1pp) {};
                \node[state, right=of r+1pp] (r+2pp) {};

                \path[->] (c) edge[loop below] node[align=left] {$0/0$\\$1/1$} (c)
                          (c) edge node[below] {$\#/\#$} (0)
                          (0) edge node[below, pos=0.6] {$1/1$} (1)
                          (0) edge[bend left=15] node[below left, pos=0.2] {$0/0$} (1p)
                          (1) edge node[below, pos=0.6] {$1/1$} (2)
                          (1) edge[bend left=15] node[below left, pos=0.2] {$0/0$} (2p)
                          (1p) edge node[below, align=left] {$0/0$\\$1/1$} (2p)
                          (2) edge node[below, pos=0.6] {$1/1$} (d)
                          (2) edge[bend left=15] node[below left, pos=0.2] {$0/0$} (dp)
                          (2p) edge node[below, align=left] {$0/0$\\$1/1$} (dp)
                          (r) edge node[below] {$\#/\#$} (r+1)
                          (rp) edge node[below] {$\#/\#$} (r+1p)
                          (r+1) edge node[below] {$1/1$} (r+2)
                          (r+1p) edge node[below] {$1/0$} (r+2p)
                          (z) edge[loop above] node[align=left] {$0/0$\\$1/1$} (z)
                          (z) edge node[below] {$\#/\#$} (0pp)
                          (0pp) edge node[align=left, below] {$1/1$\\$0/0$} (1pp)
                          (1pp) edge node[align=left, below] {$1/1$\\$0/0$} (2pp)
                          (2pp) edge node[align=left, below] {$1/1$\\$0/0$} (dpp)
                          (rpp) edge node[below] {$\#/\#$} (r+1pp)
                          (r+1pp) edge node[below] {$1/0$} (r+2pp)

                          (0) edge[bend left, dashed] node[above] {$\#/\#$} (1)
                          (1) edge[bend left, dashed] node[above] {$\#/\#$} (2)
                          (2) edge[bend left, dashed] node[above] {$\#/\#$} (d)
                          (r) edge[bend left, dashed] node[above, align=center] {$0/0$\\$1/1$} (r+1)
                          (r+1) edge[bend left, dashed] node[above, align=center] {$0/0$\\$\#/\#$} (r+2)
                          (r+2) edge[loop right, dashed] node[align=center] {$0/0$\\$1/1$\\$\#/\#$} (r+2)

                          (1p) edge[bend left, dashed] node[above, pos=0.3] {$\#/\#$} (2p)
                          (2p) edge[bend left, dashed] node[above, pos=0.3] {$\#/\#$} (dp)
                          (rp) edge[bend left, dashed] node[above, align=center] {$0/0$\\$1/1$} (r+1p)
                          (r+1p) edge[bend left, dashed] node[above, align=center] {$0/1$\\$\#/\#$} (r+2p)
                          (r+2p) edge[loop right, dashed] node[align=center] {$0/0$\\$1/1$\\$\#/\#$} (r+2p)

                          (0pp) edge[bend left, dashed] node[above] {$\#/\#$} (1pp)
                          (1pp) edge[bend left, dashed] node[above] {$\#/\#$} (2pp)
                          (2pp) edge[bend left, dashed] node[above] {$\#/\#$} (dpp)
                          (rpp) edge[bend left, dashed] node[above, align=center] {$0/0$\\$1/1$} (r+1pp)
                          (r+1pp) edge[bend left, dashed] node[above, align=center] {$0/1$\\$\#/\#$} (r+2pp)
                          (r+2pp) edge[loop right, dashed] node[align=center] {$0/0$\\$1/1$\\$\#/\#$} (r+2pp)
                ;
              \end{tikzpicture}}
          \end{center}
        \end{figure}
        Most interesting are the states $c$ (\enquote{check}) and $d$ (\enquote{disable}). On a word of the form $\{ 0, 1 \}^* \# \{ 0, 1 \}^r \# 1$, $c$ checks whether any of the letters in the $\{ 0, 1 \}^r$ block is $0$. In that case, it switches the trailing $1$ into a $0$; otherwise, it does not change the word at all. The behavior of $d$ is similar. However, it \emph{always} changes the trailing $1$ into a $0$.

        Combining these automata, we define $\mathcal{T}$ as the disjoint union of $\mathcal{T}'$,\allowbreak $\mathcal{T}_1,\allowbreak \mathcal{T}_2,\allowbreak \dots,\allowbreak \mathcal{T}_k$, $\mathcal{T}_1',\allowbreak \mathcal{T}_2',\allowbreak \dots,\allowbreak \mathcal{T}_k'$ and $\mathcal{T}_1'',\allowbreak \mathcal{T}_2'',\allowbreak \dots,\allowbreak \mathcal{T}_k''$ where we add, for each $k \in \{ 1, 2, \dots, r \}$, transitions with input $\#$ and output $\#$ from the non-final states $z \in Z_k \setminus F_k$ to the $0$-labeled state of $\mathcal{T}_k'$ (in $\mathcal{T}$) and from the final states $z \in F_k$ to the $0$-labeled state of $\mathcal{T}_k''$. We give a schematic representation of $\mathcal{T}$ in \autoref{fig:T}.
        \begin{figure}[h]\caption{Schematic representation of $\mathcal{T}$}\label{fig:T}
          \begin{center}
            \begin{tikzpicture}[auto, shorten >=1pt, >=latex]
            \node[rectangle, draw, minimum width=2.5cm, minimum height=4cm] (TkRect) {};
            \node[right=0.2cm of TkRect.west, anchor=north, rotate=90] (TkAk) {$\mathcal{T}_k$ (from $\mathcal{A}_k$)};
            \node[state, below left=0.75cm of TkRect.north east] (non-final) {};
            \node[state, accepting, above left=0.75cm of TkRect.south east] (final) {};
            \coordinate (topright) at ($(1cm, 0) + (TkRect.north east)$);
            \coordinate (bottomright) at ($(1cm, 0) + (TkRect.south east)$);
            \coordinate (topleft) at ($(-0.5cm, 0) + (TkRect.north west)$);
            \node[rectangle, draw, minimum width=5cm, minimum height=1.75cm, below right=0cm of topright] (TkpRect) {};
            \node[left=0.2cm of TkpRect.east, anchor=north, rotate=-90] (Tkp) {$\mathcal{T}_k'$};
            \node[state, below right=0.75cm of TkpRect.north west] (0p) {$0$};
            \path[->] (non-final) edge node {$\#/\#$} (0p);
            \node[rectangle, draw, minimum width=5cm, minimum height=1.75cm, above right=0cm of bottomright] (TkppRect) {};
            \node[left=0.2cm of TkppRect.east, anchor=north, rotate=-90] (Tkpp) {$\mathcal{T}_k''$};
            \node[state, above right=0.75cm of TkppRect.south west] (0pp) {$0$};
            \path[->] (final) edge node {$\#/\#$} (0pp);
            \node[rectangle, draw, minimum size=2cm, below left=0cm of topleft] {$\mathcal{T}'$};
            \end{tikzpicture}%
          \end{center}%
        \end{figure}
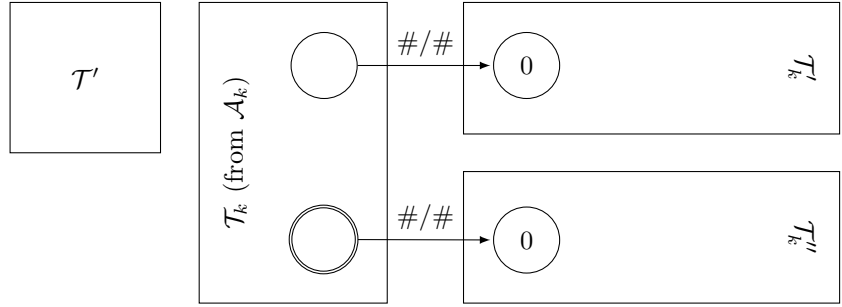
        
        We are going to use $\mathcal{T}$ as the automaton in the instance of the uniform word problem for automaton-inverse semigroups. Therefore, one should verify that $\mathcal{T}$ is an $\inverse{\mathscr{S}}$-automaton and can be computed from $\mathcal{A}_1, \mathcal{A}_2, \dots, \mathcal{A}_r$ in logarithmic space. Besides $\mathcal{T}$, we also need two finite sequences of its states to form a proper problem instance. As the sequence on the left hand size, we use $c z_{0, r} z_{0, r - 1} \dots z_{0, 1}$ and, on the right hand size, we use $d z_{0, r} z_{0, r - 1} \dots z_{0, 1}$.

        It remains to show that we have $c z_{0, r} z_{0, r - 1} \dots z_{0, 1} \circ u = d z_{0, r} z_{0, r - 1} \dots z_{0, 1} \circ u$ for all $u \in \{ 0, 1, \# \}^*$ if and only if $\bigcap_{k = 1}^r L(\mathcal{A}_k) = \emptyset$. For this, we first investigate the behavior on a word $u$ that is not a prefix of a word from $\{ 0, 1 \}^* \# 1^r \# 1$. By construction, we have that $z_{0, r} z_{0, r - 1} \dots z_{0, 1} \circ u$ is undefined because either $u$ is not even a prefix of a word from $\{ 0, 1 \}^* \# \{ 0, 1 \}^r \# 1$ -- in which case $z_{0, 1} \circ u$ is already undefined -- or it has a prefix in $\{ 0, 1 \}^* \# 1^s 0$ for an $s \in \mathbb{N}$. Let this prefix be $w \# 1^s 0$. Then, we have $z_{0, s} z_{0, s - 1} \dots z_{0, 1} \circ u = u' \in w \# \{ 0, 1 \}^s 0$. But on such a word, $z_{0, s + 1} \circ{}$ is undefined.

        Next, let $\bigcap_{k = 1}^r L(\mathcal{A}_k) = \emptyset$. By the previous discussion and due to the behavior of automata, we only have to show $c z_{0, r} z_{0, r - 1} \dots z_{0, 1} \circ u = d z_{0, r} z_{0, r - 1} \dots z_{0, 1} \circ u$ for all words from $\{ 0, 1 \}^* \# 1^r \# 1$. Let $u = w \# 1^r \# 1$ be such a word. Let $k_0$ be the smallest number such that $w \not\in L(\mathcal{A}_{k_0})$. By construction of $\mathcal{T}$, we have $z_k \circ u = u$ for $1 \leq k < k_0$: no automaton $\mathcal{T}_k$ changes the $w$ part of $u$ and, by the choice of $k_0$, we continue at the $0$-labeled state of $\mathcal{T}_k''$ after reading the first $\#$ in $u$; this does not change the $1^r \# 1$ part of $u$ either. Thus, we have $z_{0, k_0 - 1} z_{0, k_0 - 2} \dots z_{0, 1} \circ u = u$. Now, applying $z_{0, k_0} \circ u$, again, leaves the $w$ part unchanged (due to the construction of $\mathcal{T}_{k_0}$) but switches the $k^\textnormal{th}$ $1$ in the $\# 1^r \# 1$ part to $0$ (because, after reading $w \#$, we enter $\mathcal{T}_k'$ in $\mathcal{T}$). In the end, we have $z_{0, r} z_{0, r - 1} \dots z_{0, 1} \circ u = u' \in w \# 1^{k_0 - 1} 0 \{ 0, 1 \}^{r - k_0} \# 1$. Now, both, $c \circ{}$ and $d \circ{}$, only switch the trailing $1$ in $u'$ into a zero. Thus, we have $c \circ u' = d \circ u'$.

        Finally, let $w \in \bigcap_{k = 1}^r L(\mathcal{A}_k)$. Define $u = w \# 1^r \# 1$. For each $k \in \{ 1, 2, \dots, r \}$, we have $z_{0, k} \circ u = u$ because we are in a final state after reading $w$ in $\mathcal{A}_k$ and, thus, continue in $\mathcal{T}_k''$ which does not change $u$. Therefore, we have $z_{0, r} z_{0, r - 1} \dots\allowbreak z_{0, 1} \circ u = u$. Because there is no $0$ in the $1^r$ block of $u$, $c \circ{}$ will not change the trailing $1$ into a $0$ but $d \circ{}$ will, which shows $c z_{0, r} z_{0, r - 1} \dots z_{0, 1} \circ u \neq d z_{0, r} z_{0, r - 1} \dots z_{0, 1} \circ u$ and concludes the proof.

        We have shown the first two items of the proposition's assertion. For the uniform word problem for automaton \emph{groups}, we have to turn $\mathcal{T}$ into a complete automaton while keeping it deterministic and reverse deterministic. It is already complete in its $\mathcal{T}_k$ parts. In the $\mathcal{T}'$ and $\mathcal{T}_k'$ parts, we include the dashed transitions from \autoref{fig:Tp} and \autoref{fig:Tkp}. We may do the same for $\mathcal{T}_k''$; here, however, we have to change the (dashed) transition $\trans{(k - 1)}{0}{1}{k}$ to $\trans{(k - 1)}{0}{0}{k}$. One should verify that this new version of $\mathcal{T}$ is a $\mathscr{G}$-automaton. We only want to show $\PSPACE$-hardness of the uniform word problem for automaton groups \emph{with a single rational constraints}. The rational constraint we are going to use is that we only consider words from $\{ 0, 1 \}^* \# 1^r \# 1$. We may construct an according acceptor similar to the $d$ branch in \autoref{fig:Tp} in logarithmic space. This allows us to use the same proof as for the case of automaton-inverse semigroups above because the constraint ensures that we never use any of the dashed transitions (the new version of $\mathcal{T}$ behaves like the old version on all relevant words).
      \end{proof}

      The last proof shows $\PSPACE$-hardness of the stated uniform word problems even if one restricts the automata to the fixed ternary alphabet $\{ 0, 1, \# \}$. We want to remark that the result also holds for the stricter restriction to the binary alphabet $\{ a, b \}$. This can be shown by encoding $0$ as $aa$, $1$ as $ab$ and $\#$ as $bb$ and adapting the previous automata accordingly. It is important to note that, using this encoding, we preserve inverse-determinism; in particular, it is still possible to toggle $0$/$aa$ and $1$/$ab$ without reading and outputting two letters in one go.

      Next, we show a generalization of the previous proposition: we make the transition from the uniform to the non-uniform problems. This generalization, however, comes at the cost of increased complexity in the encoding techniques and the proof as a whole.
      \begin{prop}\label{prop:nonuniformPSPACEhardness}
        The following holds:
        \begin{itemize}
          \item There is an automaton semigroup for which the word problem is $\PSPACE$-hard.
          \item There is an automaton-inverse semigroup for which the word problem is $\PSPACE$-hard.
          \item There is an automaton group for which the word problem with a single rational constraint is $\PSPACE$-hard (where the rational constraint is part of the input).
        \end{itemize}
      \end{prop}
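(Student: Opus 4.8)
The plan is to reduce from a \emph{fixed} $\PSPACE$-complete problem, so that the generating automaton can be held fixed while only the input to the word problem varies. Concretely, I would fix a deterministic Turing machine $M$ deciding a $\PSPACE$-complete language $L$ in space $N = \mathrm{poly}(|x|)$, and reduce ``$x \in L$?'' to the word problem for a single, fixed automaton structure. The guiding principle is the correspondence already visible in \autoref{prop:nonuniformUpperBound}: the length of a state sequence plays the role of space, while the length of the word on which we evaluate plays the role of time. Thus a polynomially long element over a fixed generating set can simulate a polynomial-space machine, and the existential quantifier hidden in the negation of the word problem (``there is a $u$ on which the two elements differ'') supplies the guess of an exponentially long accepting computation. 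In contrast to the uniform construction of \autoref{prop:uniformPSPACEhard}, where the instance (the acceptors $\mathcal{A}_k$) is compiled into the \emph{states} of the automaton, here the instance must be pushed entirely into the input state sequence, since the automaton is no longer part of the input.

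Following the template of \autoref{prop:uniformPSPACEhard}, I would let the universally quantified word $u$ encode a purported computation history $C_0 \# C_1 \# \dots \# C_t$ of $M$, each configuration $C_i$ being a string of length $N$ over a fixed alphabet, and build two elements $\boldsymbol{q}_x$ and $\boldsymbol{p}_x$ that act identically on every $u$ except on those $u$ that encode a \emph{valid accepting} history whose initial configuration is the one of $M$ on $x$; on such witnesses the two elements disagree, for instance by toggling a trailing marker exactly as the states $c$ and $d$ do in the uniform proof. The verification of a history decomposes into three fixed kinds of local checks --- that $C_0$ is initial, that each step $C_i \to C_{i+1}$ respects $M$'s (fixed) transition function, and that some $C_i$ is accepting --- each realized by a gadget over a fixed finite generating set. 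The dependence on $x$ enters only through the initial-configuration check, spelled out by a prefix of $\boldsymbol{q}_x$ of length $\O(N)$ (one generator per input symbol, plus a blank-filling part); this prefix, and all of $\boldsymbol{q}_x, \boldsymbol{p}_x$, is computable from $x$ in logarithmic space. Then $\boldsymbol{q}_x \circ{} = \boldsymbol{p}_x \circ{}$ holds iff no accepting history exists iff $x \notin L$, and since $\PSPACE$ is closed under complementation (so $\overline{L}$ is $\PSPACE$-complete too) this yields $\PSPACE$-hardness.

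As in \autoref{prop:uniformPSPACEhard}, I would first carry out the construction with an $\inverse{\mathscr{S}}$-automaton, using only reversible bit-toggling and pass-through transitions so that inverse-determinism is maintained; this simultaneously settles the automaton-semigroup and automaton-inverse-semigroup items, the generated structures being genuinely fixed (recall $\inverse{\mathscr{S}}(\mathcal{T}) = \mathscr{S}(\mathcal{T} \cup \inverse{\mathcal{T}})$ is in particular an automaton semigroup). For the automaton-group item I would complete the fixed automaton to a $\mathscr{G}$-automaton by adding sink-like transitions and then restrict, by a single rational constraint that is part of the input, to words $u$ of the correct syntactic shape (well-formed histories of the appropriate length with the initial block determined by $x$); on these relevant words the completed automaton behaves like the partial one, so the same equivalence goes through. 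For groups, the input $x$ and the space bound may alternatively be absorbed into the rational constraint, which is permitted since it is part of the input.

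The main obstacle is the step-consistency check. A single finite transducer cannot compare the cell $C_{i+1}[j]$ with the cells of $C_i$ that determine it, because these lie a distance $\Theta(N)$ apart in $u$ and $N$ grows with the instance; this comparison must therefore be distributed across a cascade of $\Theta(N)$ fixed delay-and-forward gadgets, so that a polynomially long element realizes the long-range alignment between consecutive configurations using only finitely many generator types. Making this delay-line interact correctly with the boundary markers, flagging a single violation reliably, and doing all of this while preserving determinism and inverse-determinism (and, after completion, reversibility) is where the bulk of the technical work lies; it is also precisely the point at which the space-equals-length correspondence is made rigorous. Everything else follows the pattern of \autoref{prop:uniformPSPACEhard}, and together with \autoref{prop:nonuniformUpperBound} one in fact obtains $\PSPACE$-completeness.
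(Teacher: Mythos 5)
Your high-level architecture does match the paper's: reduce from a fixed $\PSPACE$-hard language (the paper uses a $\PSPACE$-universal machine, which has the same effect), push the instance entirely into the state sequence with one generator per cell of the initial configuration, let the universally quantified word encode a computation history, detect acceptance by toggling a trailing letter, and, for the group case, complete the automaton and add a single rational constraint enforcing syntactic shape. However, you explicitly defer ``the bulk of the technical work'' at exactly the point where the paper's essential idea lives, and the mechanism you sketch for it would fail. Your plan for the step-consistency check is a ``cascade of $\Theta(N)$ fixed delay-and-forward gadgets'' that shifts tape content through the word to align $C_i$ with $C_{i+1}$. Any length-preserving transducer that delays or forwards symbols across a block boundary computes a non-injective map (the symbols pushed off the boundary are forgotten), and non-injective generators are inadmissible for an automaton-inverse semigroup, let alone a group; so this mechanism is incompatible with the inverse-determinism you claim to maintain. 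The paper's solution is structurally different: no data is shifted at all. Each tape position $i$ gets its own checking generator $q_{\gamma_{-1},\gamma_0,\gamma_1}$ which stores the three relevant symbols of the \emph{previous} configuration in its finite state while scanning forward through the word, and the coordination of which position is currently being checked is done by ``check-marks'' realized as binary counters (a digit block after every symbol) incremented by adding-machine transitions --- the one bit-manipulation that is injective and hence compatible with an $\inverse{\mathscr{S}}$-automaton. A companion state $\checkmark$ advances these counters, so the element $\boldsymbol{q}_{c_0}$ is a product of $p(n)$ pairs, each performing one full sweep and one positional check. This device is the heart of the proof and is missing from your proposal.

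There is a second, smaller gap in your group case. You assert that on words satisfying the rational constraint ``the completed automaton behaves like the partial one, so the same equivalence goes through.'' But a log-space-constructible constraint can only enforce syntactic well-formedness, not validity of the transitions between consecutive configurations; on a well-formed word encoding an \emph{invalid} history the partial automaton is undefined, and the complete automaton must output something. If the completion were neutral (sink-like, acting as the identity), then a well-formed but invalid history containing an accepting symbol would make your two elements differ, and such words exist even when $x \notin L$, breaking the reduction. The paper resolves this with an extra state $f$ reached by the formerly missing transitions: it increments a dedicated counter in the $\$$-suffix of the word, and the toggle state $e$ produces a difference only when that counter is still zero \emph{and} a final state occurs. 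Without some such invalidity-signalling mechanism, your completed automaton does not yield a correct reduction.
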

      \begin{proof}
        The first assertion is an implication of the second one. Therefore, we start with the inverse semigroup case and show the group case later.

        \paragraph{The Automaton-Inverse Semigroup Case}
        There exists a (deterministic) one-tape, $\PSPACE$-universal Turing Machine $\mathcal{M}_0$, i.\,e.\ a Turing Machine that halts on input of a word $w$ and the encoding of another $\PSPACE$ machine $\mathcal{M}$ if and only if $\mathcal{M}$ halts on $w$; furthermore, the space used by $\mathcal{M}_0$ on input of length $n$ is bounded by a polynomial $p(n)$. Let $\Gamma$ be the tape alphabet of $\mathcal{M}_0$, $Z$ its state set, $z_0$ its initial state and $\blank$ its blank symbol. Without loss of generality, we may assume that $\mathcal{M}_0$ never visits a tape position to the left of its initial position, which we call position $0$. Of course, $\mathcal{M}_0$ neither visits a position to the right of position $p(n) - 1$ (on input of length $n$). Equally without loss of generality, we may assume that the computation of $\mathcal{M}_0$ never stops (even in the case of acceptance). The machine accepts an input word if the corresponding (infinite) computation passes a final state; let $E$ be the set of these final states.

        Suppose, at time step $t$, $\mathcal{M}_0$'s tape contains the symbols $\gamma_0 \gamma_1 \dots \gamma_{p(n) - 1}$, its head is at position $i$ and it is in state $z$, then, we write its configuration as $\gamma_0 \gamma_1 \dots \gamma_{i - 1} (\gamma_i, z) \gamma_{i + 1} \gamma_{i + 2}\allowbreak \dots \gamma_{p(n) - 1}$, which we consider to be a word of length $p(n)$ over the alphabet $\Delta = \Gamma \cup (\Gamma \times Z)$. For example, on input $a_0 a_1 \dots a_{n - 1}$ (where $a_0, a_1, \dots, a_{n - 1}$ are single letters), the initial configuration, i.\,e.\ that at time $0$, of $\mathcal{M}_0$ is written as $(a_0, z_0) a_1 a_2 \dots a_{n - 1} \blank \blank \dots \blank$.

        Note that the letter at position $i$ in the configuration for time step $t$ is uniquely determined by the letters at positions $i - 1$, $i$ and $i + 1$ of the configuration at time step $t - 1$. Here, the symbols at position $-1$ (the one left to the initial position) and at position $p(n)$ are implicitly considered to be blanks. This yields the transition map $\tau: \Delta^3 \to \Delta$, which only depends on $\mathcal{M}_0$. The general idea of this proof is to create an automaton with a \enquote{checker} state which encodes the finite map $\tau$ to check a sequence of configurations $u = c_1 \# c_2 \# \dots$ separated by the new symbol $\# \not\in \Delta$ for valid transitions in each position. Let $c_t = \gamma_0^{(t)} \gamma_1^{(t)} \dots \gamma_{p(n) - 1}^{(t)}$ with $\gamma_i^{(t)} \in \Delta$ for all $0 \leq i < p(n)$ and $t > 0$. The first application of the checker state to $u$ will check all zeroth positions, i.\,e.\ $\gamma_0^{(1)}, \gamma_0^{(2)}, \dots$. If the transition is valid, it will put a check-mark on the corresponding letter in $u$. When applying the checker state for the second time, it will ignore all zeroth positions since they already have a check-mark and check all first positions; see \autoref{fig:checkmarking}.
        If we repeat this $p(n)$ times, we know whether the sequence of configurations is a valid computation of $\mathcal{M}_0$. This approach has a drawback, however: putting a check-mark on the next letter that, so-far, does not have one is an operation which results in a non-inverse-deterministic automaton. So, we cannot use this approach for automaton-inverse semigroups and automaton groups directly.

        \begin{figure}[h]\caption{Illustration of the check-mark approach.}\label{fig:checkmarking}%
          \centering
          \begin{tikzpicture}
            \matrix (m) [matrix of math nodes, every node/.style/.append={inner sep=0pt}] {
              \underset{\phantom{\checkmark}}{\gamma_0^{(1)}} & \gamma_1^{(1)} & \gamma_2^{(1)} & \dots & \gamma_{p(n) - 1}^{(1)} & \# & \gamma_0^{(2)} & \gamma_1^{(2)} & \gamma_2^{(2)} & \dots & \gamma_{p(n) - 1}^{(2)} & \# & \dots \\
              \underset{\checkmark}{\gamma_0^{(1)}} & \gamma_1^{(1)} & \gamma_2^{(1)} & \dots & \gamma_{p(n) - 1}^{(1)} & \# & \underset{\checkmark}{\gamma_0^{(2)}} & \gamma_1^{(2)} & \gamma_2^{(2)} & \dots & \gamma_{p(n) - 1}^{(2)} & \# & \dots \\
              \underset{\checkmark}{\gamma_0^{(1)}} & \underset{\checkmark}{\gamma_1^{(1)}} & \gamma_2^{(1)} & \dots & \gamma_{p(n) - 1}^{(1)} & \# & \underset{\checkmark}{\gamma_0^{(2)}} & \underset{\checkmark}{\gamma_1^{(2)}} & \gamma_2^{(2)} & \dots & \gamma_{p(n) - 1}^{(2)} & \# & \dots \\
            };
            \path[->] (m-1-1.west) edge[bend right] node[left] {checker} (m-2-1.west)
                      (m-2-1.west) edge[bend right] node[left] {checker} (m-3-1.west);
          \end{tikzpicture}%
        \end{figure}
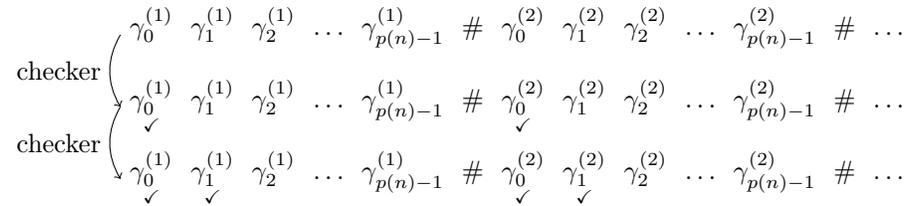%
        \paragraph*{Check-Marking using $\inverse{\mathscr{S}}$-Automata}
        To overcome this, we are going to use a modified version of the additive automaton from \autoref{expl:automatonForZ} and add a digit block after each configuration symbol. The construction will ensure that the block is always sufficiently long; this means a length of about $\log_2 p(n)$ but the exact length does not matter. For this block, we need two new symbols $0, 1 \not\in \Delta$. Instead of putting a check-mark on a symbol, we are going to do a $+1$ on its digit block. A symbol is considered to have a check-mark if and only if its block contains a $1$. For technical reasons, we will later also need to add a $\$ 00 \dots 0 \$ 0$ suffix to our configuration sequences,\footnote{Strictly speaking, this is only necessary for the group case.} where $\$$ is another new symbol, so we end up with the alphabet $\Sigma = \Delta \sqcup \{ 0, 1, \#, \$ \}$. The length of the block at the front of the suffix is again \enquote{long enough}. Summing this up, our automaton is designed to operate on words of the form
        \begin{align*}
          &\gamma_0^{(1)} 00 \dots 0 \quad \gamma_1^{(1)} 00 \dots 0 \quad \dots \quad \gamma_{p(n) - 1}^{(1)} \quad \# \\
          \textcolor{gray}{\drsh} \hspace{1cm}&\gamma_0^{(2)} 00 \dots 0 \quad \gamma_1^{(2)} 00 \dots 0 \quad \dots \quad \gamma_{p(n) - 1}^{(2)} \quad \# \quad \dots \quad \# \\
          \textcolor{gray}{\drsh} \hspace{1cm}&\gamma_0^{(T)} 00 \dots 0 \quad \gamma_1^{(T)} 00 \dots 0 \quad \dots \quad \gamma_{p(n) - 1}^{(T)} \quad \$ 00 \dots 0 \$ 0%
        \end{align*}
        where $T$ is some number and $\gamma_i^{(t)} \in \Delta$ for all $1 \leq t \leq T$ and all $0 \leq i < p(n)$.

        The automaton in \autoref{fig:PSpaceCheckmarking} -- more precisely, its $\checkmark$ state -- puts a check-mark in the sense described above on the first symbol without one in each configuration in the sequence. A transition label $\id_X$ means that, for every $x \in X$, there is such an $x/x$-labeled transition. Note that the automaton is deterministic and inverse-deterministic but not complete.
        \begin{figure}[h]%
          \centering
          \begin{tikzpicture}[auto, shorten >=1pt, >=latex]
            \node[state] (checkmark) {$\checkmark$};
            \node[state, right=of checkmark] (1) {};
            \node[state, above right=of 1] (2) {};
            \node[state, below right=of 2] (3) {};
            \node[state, below right=of 1] (4) {};
            \node[state, above left=of 2] (skip) {};
            \node[state, above right=of skip] (end) {};
            \node[right=2cm of 2, align=left] (2annotation) {\begin{varwidth}{5cm}\enquote{So far, the original input digit block of this symbol did not contain a $1$.}\end{varwidth}};
            \node[below right=of 3] (34annotation) {\begin{varwidth}{5cm}\enquote{The last symbol's digit block contained at least one $1$.}\end{varwidth}};
            \node[above right=of 2] (skipannotation) {\begin{varwidth}{5cm}\enquote{Skip everything up to the next configuration}\end{varwidth}};

            \path[->] (checkmark) edge node[below] {$\id_{\Delta}$} (1)
                      (1) edge node[below right] {$0/1$} (2)
                      (1) edge node[below left] {$1/0$} (4)
                      (2) edge[loop right, out=60, looseness=7, in=30] node {$0/0$} (2)
                      (2) edge node {$1/1$} (3)
                      (2) edge node[below left] {$\id_\Delta$} (skip)
                      (2) edge node[right] {$\$/\$$} (end)
                      (2) edge node[above] {$\#/\#$} (checkmark)
                      (3) edge[loop right] node[align=left] {$0/0$\\$1/1$} (3)
                      (3) edge node[below] {$\id_\Delta$} (1)
                      (4) edge[loop below] node {$1/0$} (4)
                      (4) edge node[below right] {$0/1$} (3)
                      (skip) edge[loop left] node {$\id_{\Delta \cup \{ 0, 1 \}}$} (skip)
                      (skip) edge node[above left] {$\#/\#$} (checkmark)
                      (skip) edge node {$\$/\$$} (end)
                      (end) edge[loop right] node {$\id_{\{ 0, 1, \$ \}}$} (end)
                      ;
            \path[->] (2annotation) edge[dotted] (2)
                      (34annotation) edge[dotted] (3)
                      (34annotation) edge[dotted] (4)
                      (skipannotation) edge[dotted] (skip);
          \end{tikzpicture}%
          \vspace*{-\baselineskip}
          \caption{The automaton used for generalized check-marking}\label{fig:PSpaceCheckmarking}%
        \end{figure}
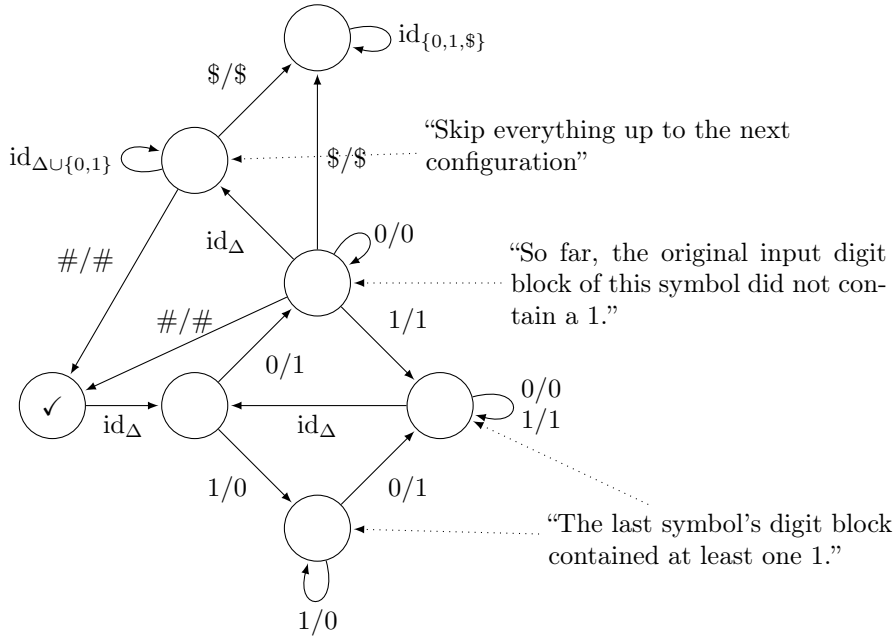%

        \paragraph{Checking the Validity}
        For the checking part of the automaton, we use the following states for each $\gamma_{-1}, \gamma_0, \gamma_1,\allowbreak \gamma'_{-1},\allowbreak \gamma'_0 \in \Delta$.
        \begin{center}
          \resizebox{\linewidth}{!}{%
            \begin{tikzpicture}[auto, shorten >=1pt, >=latex]
              \node[state, ellipse, align=center] (0) {\begin{tabular}{lll}
                 $\gamma_{-1}$,  & $\gamma_0$, & $\gamma_1$ \\
                 $\gamma'_{-1}$, & $\gamma'_0$    &
                \end{tabular}};
              \node[draw, circle, inner sep=0.1mm, right=2mm of 0.west] (s0) {$0$};

              \node[state, ellipse, align=center, right=0.5cm of 0] (1) {\begin{tabular}{lll}
                $\gamma_{-1}$,  & $\gamma_0$, & $\gamma_1$ \\
                $\gamma'_{-1}$, & $\gamma'_0$    &
                \end{tabular}};
              \node[draw, circle, inner sep=0.1mm, right=2mm of 1.west] (s1) {$1$};

              \node[state, ellipse, align=center, right=0.5cm of 1] (b) {\begin{tabular}{lll}
                $\gamma_{-1}$,  & $\gamma_0$, & $\gamma_1$ \\
                $\blank$ & &
                \end{tabular}};
            \end{tikzpicture}%
          }
        \end{center}
        The idea behind these states is the following. We have check-marked (in the more general sense mentioned above) the positions $0, 1, \dots, i - 1$ for all time steps in our configuration sequence, i.\,e.\ we now want to check the respective $i^\textnormal{th}$ position for all time steps. Suppose further that we already have successfully done this checking of position $i$ for the time steps up to and including $t - 1$. When we did the checking for time step $t - 1$, we stored the symbols at position $i - 1$, $i$ and $i + 1$ in the state. For storing this information, we use the upper row, i.\,e.\ the entires $\gamma_{-1}, \gamma_0, \gamma_1$. Now, we start reading the configuration (with the digit-blocks for each symbol) for time step $t$. During the reading, we remember the last two seen symbols of the configuration in the lower row. Whenever a new symbol begins, we have not seen its digit-block yet and, therefore, cannot know whether it is the first positions which has not been check-marked yet (i.\,e.\ position $i$). To handle this, we use the distinction between the circled \circledZero{} and circled \circledOne{} states. We are in the circled \circledZero{} state just after we have read $\gamma_0'$ and before we begin to read the corresponding digit-block. As long as we only see $0$ in this block, we stay in the \circledZero{} state. If we see a $1$, we know that the symbol has already been check-marked and we move to the \circledOne{} state where we skip everything up to the next symbol (updating $\gamma_{-1}'$ and $\gamma_{0}'$ accordingly). If we see the next symbol before we see a $1$, we know that $\gamma_0'$ is indeed the symbol at position $i$ of the time step $t$ and we can check the validity of the transition from step $t - 1$ to step $t$ at position $i$ using $\tau$. If the transition is valid, we want to skip the rest of the configuration at time step $t$ and begin again for time step $t + 1$. However, while we do the skipping, we have to store the symbols at positions $i - 1$, $i$ and $i + 1$ correctly for the next time step. This is what the last type of states (the ones with only a blank symbol in the lower row) is used for. We can also check the validity if we see one of the end markers $\#$ and $\$$ instead of the next symbol before we see a $1$. In the case of $\#$, we store the implicit $\blank$ as the symbol at position $i + 1$ and continue normally. In the case of $\$$, we skip to the end of the word.

        These ideas are implemented by the transitions which are schematically depicted in \autoref{fig:PSpaceCompleteTransitions}. We have these transitions for every $\gamma_{-1}, \gamma_0, \gamma_1,\allowbreak \gamma'_{-1},\allowbreak \gamma'_0 \in \Delta$.
        \begin{figure}[h]
          \begin{center}
            \resizebox{\linewidth}{!}{%
              \begin{tikzpicture}[auto, shorten >=1pt, >=latex]
                \node[state, ellipse, align=center] (0) {\begin{tabular}{lll}
                  $\gamma_{-1}$,  & $\gamma_0$, & $\gamma_1$ \\
                  $\gamma'_{-1}$, & $\gamma'_0$    &
                  \end{tabular}};
                \node[draw, circle, inner sep=0.1mm, right=2mm of 0.west] (s0) {$0$};

                \node[state, ellipse, align=center, below left=of 0] (1) {\begin{tabular}{lll}
                  $\gamma_{-1}$,  & $\gamma_0$, & $\gamma_1$ \\
                  $\gamma'_{-1}$, & $\gamma'_0$    &
                  \end{tabular}};
                \node[draw, circle, inner sep=0.1mm, right=2mm of 1.west] (s1) {$1$};

                \node[state, ellipse, align=center, below right=of 0] (b) {\begin{tabular}{lll}
                  $\gamma'_{-1}$,  & $\gamma'_0$, & $\gamma'_1$ \\
                  $\blank$ & &
                  \end{tabular}};

                \node[state, dotted, ellipse, align=center, below=of 1] (0') {\begin{tabular}{lll}
                  $\gamma_{-1}$, & $\gamma_0$, & $\gamma_1$ \\
                  $\gamma'_0$,   & $\gamma$    &
                  \end{tabular}};
                \node[draw, circle, inner sep=0.1mm, right=2mm of 0'.west] (s0') {$0$};

                \node[state, dotted, ellipse, align=center, below=of b] (1') {\begin{tabular}{lll}
                  $\gamma'_{-1}$, & $\gamma'_0$, & $\gamma'_1$ \\
                  $\blank$,       & $\blank$ &
                  \end{tabular}};
                \node[draw, circle, inner sep=0.1mm, right=2mm of 1'.west] (s1') {$1$};
                \node[left=0pt of 1'] (1'label) {$q_{\gamma'_{-1}, \gamma'_{0}, \gamma'_{1}} = {}$};

                \node[state, dotted, ellipse, align=center, above=of 0] (1b') {\begin{tabular}{lll}
                  $\gamma'_{-1}$, & $\gamma'_0$, & $\blank$ \\
                  $\blank$,       & $\blank$ &
                  \end{tabular}};
                \node[draw, circle, inner sep=0.1mm, right=2mm of 1b'.west] (s1b') {$1$};

                \node[state, above right=of 0] (d1) {};
                \node[state, right=of d1] (d2) {};
                \node[state, right=of d2] (d3) {};

                \path[->] (0) edge[loop left] node {$0/0$} (0)
                              edge node {$1/1$} (1)
                              edge[dashed] node[below left] {$\gamma'_1/\gamma'_1$} (b)
                              edge[dashed] node {$\#/\#$} (1b')
                              edge[dashed] node {$\$/\$$} (d1)
                          (1) edge[loop, out=120, in=150, looseness=7] node[above left, align=center] {$0/0$\\$1/1$} (1)
                              edge[dotted] node {$\gamma/\gamma$} (0')
                          (b) edge[loop, out=60, in=30, looseness=7] node[above] {$\id_{\Gamma \cup \Gamma \times Z \cup \{ 0, 1 \}}$} (b)
                              edge node {$\#/\#$} (1')
                              edge node {$\$/\$$} (d1)
                          (d1) edge[loop above] node[align=center] {$0/0$\\$1/1$} (d1)
                               edge node {$\$/\$$} (d2)
                          (d2) edge node[align=center] {$0/0$\\$1/1$} (d3)
                ;
              \end{tikzpicture}%
            }
          \end{center}
          \caption{Schematic representation of the transitions used for the checking part of the automaton and the definition of $q_{\gamma_{-1}, \gamma_{0}, \gamma_{1}}$}\label{fig:PSpaceCompleteTransitions}
        \end{figure}
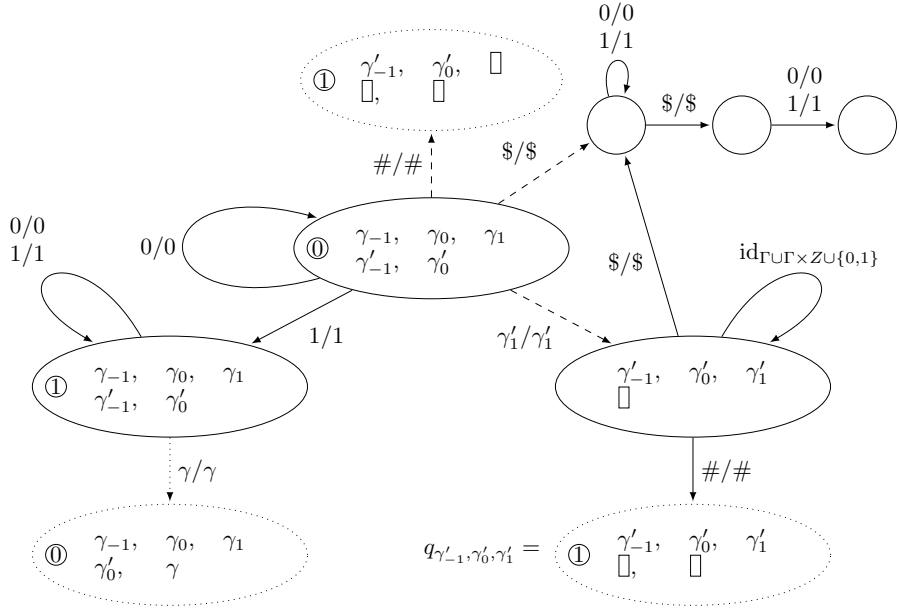
        The dashed transitions exist only if $\tau(\gamma_{-1}, \gamma_0, \gamma_1) = \gamma'_0$; if this is the case, then the transition labeled $\gamma'_1/\gamma'_1$ exists for every $\gamma'_1 \in \Delta$. The dotted $\gamma/\gamma$ transition exists for every $\gamma \in \Delta$ and the dotted states are equivalent to the non-dotted states for the corresponding other values of $\gamma_{-1}, \gamma_0, \gamma_1,\allowbreak \gamma'_{-1},\allowbreak \gamma'_0 \in \Delta$ (their outgoing edges are not drawn). Note that the functions given by the states always behave like the identity on values where they are defined and that the automaton is an $\inverse{\mathscr{S}}$-automaton.

        \paragraph{Behavior on Words Encoding Configuration Sequences}
        We will have a closer look on how the automaton behaves on a sequence of configurations next. Suppose $\mathcal{M}_0$ starts its computation in the initial configuration $\gamma_0^{(0)} \gamma_1^{(0)} \dots \gamma_{p(n) - 1}^{(0)} = (a_0, z_0) a_1 a_2 \dots a_{n - 1} \blank \blank \dots \blank$ on input $a_0 a_1 \dots a_{n - 1}$ (where $a_0, a_1, \dots a_{n - 1}$ are letters). Furthermore, let $c_t = \gamma_0^{(t)} \gamma_1^{(t)} \allowbreak\dots\allowbreak \gamma_{p(n) - 1}^{(t)}$ with $\gamma_i^{(t)} \in \Delta$ for all $0 \leq i < p(n)$ be the configuration at time $t > 0$ of this computation. Note that this configuration is well-defined for all $t \geq 0$ (by assumption on $\mathcal{M}_0$'s structure). At the border, we define $\gamma_{-1}^{(t)} = \gamma_{p(n)}^{(t)} = \blank$ for all $t \geq 0$. Finally, we define $c'_t = \gamma_0^{(t)} 0^k \gamma_1^{(t)} 0^k \dots \gamma_{p(n) - 1}^{(t)} 0^k$ as the result of adding the digit-blocks to $c_t$. Here, we set $k = \lceil \log_2 p(n) \rceil$, i.\,e.\ the digit-blocks allow us to count up to $p(n)$. Now, we consider the behavior of the automaton on the input word $u = c_1' \# c_2' \# \dots \# c_T' \$ 0^k \$ 0$ (for an arbitrary $T > 0$). Note that the word starts with the (modified) configuration for $t = 1$, not with the initial configuration ($t = 0$). Instead, the initial configuration is encoded in the element
        \[
          \boldsymbol{q}_{c_0} = \checkmark q_{\gamma_{p(n) - 2}^{(0)}, \gamma_{p(n) - 1}^{(0)}, \gamma_{p(n)}^{(0)}} \dots \checkmark q_{\gamma_{0}^{(0)}, \gamma_{1}^{(0)}, \gamma_{2}^{(0)}} \checkmark q_{\gamma_{-1}^{(0)}, \gamma_{0}^{(0)}, \gamma_{1}^{(0)}}
        \]
        (for the definition of $q_{\gamma_{-1}, \gamma_0, \gamma_1}$ please refer to \autoref{fig:PSpaceCompleteTransitions}). What happens if we apply $\boldsymbol{q}_{c_0} \circ{}$ to $u$? One may verify, that $q_{\gamma_{-1}^{(0)}, \gamma_{0}^{(0)}, \gamma_{1}^{(0)}} \circ u$ is defined and, thus, equal to $u$ because the configurations describe a valid computation. Applying $\checkmark \circ{}$ to $u$ next results in the word
        \begin{align*}
          u_1 ={}& \gamma_0^{(1)} \revbin_k(1) \enspace \gamma_1^{(1)} \revbin_k(0) \enspace \gamma_2^{(1)} \revbin_k(0) \enspace \dots \enspace \gamma_{p(n) - 1}^{(1)} \revbin_k(0) \#\\
          \textcolor{gray}{\drsh} \hspace{1cm}\phantom{u_1 ={}}& \gamma_0^{(2)} \revbin_k(1) \enspace \gamma_1^{(2)} \revbin_k(0) \enspace \gamma_2^{(2)} \revbin_k(0) \enspace \dots \enspace \gamma_{p(n) - 1}^{(2)} \revbin_k(0) \# \dots \# \\
          \textcolor{gray}{\drsh} \hspace{1cm}\phantom{u_1 ={}}& \gamma_0^{(T)} \revbin_k(1) \enspace \gamma_1^{(T)} \revbin_k(0) \enspace \gamma_2^{(T)} \revbin_k(0) \enspace \dots \enspace \gamma_{p(n) - 1}^{(T)} \revbin_k(0) \$ 0^k \$ 0 \text{,}
        \end{align*}
        where $\revbin_k(i)$ is the reverse binary representation of $i \in \mathbb{N}$ with length $k$. Applying $q_{\gamma_{0}^{(0)}, \gamma_{1}^{(0)}, \gamma_{2}^{(0)}} \circ{}$ to $u_1$ will check all $\gamma_1^{(t)}$ for validity. Again, the result will be $u_1$. After the next $\checkmark \circ{}$, we have
        \begin{align*}
          u_2 ={}& \gamma_0^{(1)} \revbin_k(2) \enspace \gamma_1^{(1)} \revbin_k(1) \enspace \gamma_2^{(1)} \revbin_k(0) \enspace \dots \enspace \gamma_{p(n) - 1}^{(1)} \revbin_k(0) \#\\
          \textcolor{gray}{\drsh} \hspace{1cm}\phantom{u_2 ={}}& \gamma_0^{(2)} \revbin_k(2) \enspace \gamma_1^{(2)} \revbin_k(1) \enspace \gamma_2^{(2)} \revbin_k(0) \enspace \dots \enspace \gamma_{p(n) - 1}^{(2)} \revbin_k(0) \# \dots \# \\
          \textcolor{gray}{\drsh} \hspace{1cm}\phantom{u_2 ={}}& \gamma_0^{(T)} \revbin_k(2) \enspace \gamma_1^{(T)} \revbin_k(1) \enspace \gamma_2^{(T)} \revbin_k(0) \enspace \dots \enspace \gamma_{p(n) - 1}^{(T)} \revbin_k(0) \$ 0^k \$ 0 \text{.}
        \end{align*}
        This continues and, in the end, we have
        \begin{align*}
          u_{p(n)} ={}&
            \gamma_0^{(1)} \revbin_k(p(n)) \enspace
            \gamma_1^{(1)} \revbin_k(p(n) - 1) \enspace
            \dots \enspace
            \gamma_{p(n) - 1}^{(1)} \revbin_k(1) \#\\
          \textcolor{gray}{\drsh} \hspace{1cm}\phantom{u_{p(n)} ={}}&
            \gamma_0^{(2)} \revbin_k(p(n)) \enspace
            \gamma_1^{(2)} \revbin_k(p(n) - 1) \enspace
            \dots \enspace
            \gamma_{p(n) - 1}^{(2)} \revbin_k(1) \# \dots \# \\
          \textcolor{gray}{\drsh} \hspace{1cm}\phantom{u_{p(n)} ={}}&
            \gamma_0^{(T)} \revbin_k(p(n)) \enspace
            \gamma_1^{(T)} \revbin_k(p(n) - 1) \enspace
            \dots \enspace
            \gamma_{p(n) - 1}^{(T)} \revbin_k(1) \$ 0^k \$ 0 \text{.}
        \end{align*}

        For comparison, we will now look at what happens if we start with a sequence of (valid) configurations which, however, is \emph{not} a valid computation of $\mathcal{M}_0$: assume, there is a (smallest) position $0 \leq i < p(n)$ such that there exits (a minimal) $t > 0$ for which the transition is not valid, i.\,e.\ we have $\tau(\gamma_{i - 1}^{(t - 1)}, \gamma_i^{(t - 1)}, \gamma_{i + 1}^{(t - 1)}) \neq \gamma_i^{(t)}$. If we apply
        \[
          \checkmark q_{\gamma_{i - 2}^{(0)}, \gamma_{i - 1}^{(0)}, \gamma_{i}^{(0)}} \dots \checkmark q_{\gamma_{0}^{(0)}, \gamma_{1}^{(0)}, \gamma_{2}^{(0)}} \checkmark q_{\gamma_{-1}^{(0)}, \gamma_{0}^{(0)}, \gamma_{1}^{(0)}} \circ{} \text{,}
        \]
        the behavior will be the same as for the case where the sequence belonged to a valid computation. Let $u_i$ be the resulting word. When we then apply $q_{\gamma_{i - 1}^{(0)}, \gamma_{i}^{(0)}, \gamma_{i + 1}^{(0)}} \circ{}$ to $u_i$, we see a difference: $q_{\gamma_{i - 1}^{(0)}, \gamma_{i}^{(0)}, \gamma_{i + 1}^{(0)}} \circ u_i$ is undefined because we miss the dashed transition from \autoref{fig:PSpaceCompleteTransitions} for time step $t$.

        \paragraph{Behavior on Other Words}
        This describes the behavior on words of the correct form but we also need to consider other words. There can be different reasons why this is the case. As a start, we only want to consider words that are prefixes of words in $\left( (\Delta 0^*)^* \# \right)^* (\Delta 0^*)^* \$ 0^* \$ 0$. For this, we use the automaton from \autoref{fig:PSpaceFormChecker} (which is an $\inverse{\mathscr{S}}$-automaton): $q_c \circ{}$ will be defined on a word if and only if the word is a prefix of one in the set.
        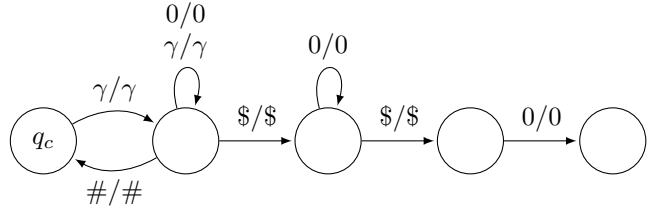
\begin{figure}[h]
          \begin{center}
            \begin{tikzpicture}[auto, shorten >=1pt, >=latex]
              \node[state] (qc) {$q_c$};
              \node[state, right=of qc] (1) {};
              \node[state, right=of 1] (2) {};
              \node[state, right=of 2] (3) {};
              \node[state, right=of 3] (4) {};

              \path[->] (qc) edge[bend left] node {$\gamma/\gamma$} (1)
                        (1) edge[bend left] node {$\#/\#$} (qc)
                            edge[loop above] node[align=center] {$0/0$\\$\gamma/\gamma$} (1)
                            edge node {$\$/\$$} (2)
                        (2) edge[loop above] node {$0/0$} (2)
                            edge node {$\$/\$$} (3)
                        (3) edge node {$0/0$} (4)
              ;
            \end{tikzpicture}
          \end{center}
          \caption{The automaton part used for checking the form of the input word. Transitions labeled with $\gamma/\gamma$ exist for every $\gamma \in \Delta$.}\label{fig:PSpaceFormChecker}
        \end{figure}

        Not every word on which $q_c \circ{}$ is defined is of the correct form, however. For example, the digit-block of a symbol can be too short so that we cannot count high enough (it cannot be too long because we ignore trailing $0$s). The good news is that this case is already handled: since there is no outgoing $\Delta$-transition from the state at the bottom of \autoref{fig:PSpaceCheckmarking}, $\boldsymbol{q}_{c_0} \circ{}$ will be undefined on a word that contains a digit-block which is too short because each $\checkmark$ will add one (in reverse/least significant bit first binary representation) to the digit block and we end up with a block consisting only of $1$s; the next application of $\checkmark$ will be undefined at the next letter after this block.\footnote{To see this, apply $\checkmark \circ{}\!$ \emph{twice} to $\gamma \revbin_2(2) \gamma \revbin_2(1) = \gamma 01 \gamma 10$ (for $\gamma \in \Delta$).} The same is true if the word encodes a sequence of configurations in which one configuration is shorter than $p(n)$. However, we also need to handle the case where one of the configurations is too long, i.\,e.\ longer than $p(n)$, because we do not check this part for validity. For this, we use $q_l$ from \autoref{fig:PSpaceCheckAllCheckmarked} (which is, again, an $\inverse{\mathscr{S}}$-automaton).
        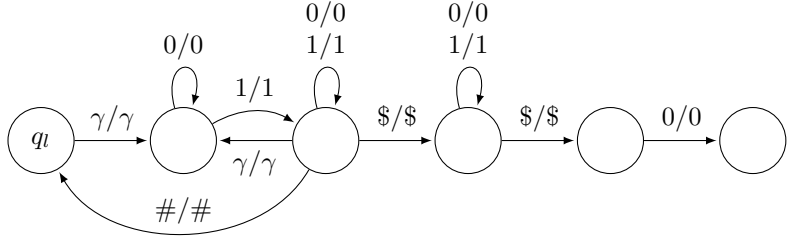
\begin{figure}[h]
          \begin{center}
            \begin{tikzpicture}[auto, shorten >=1pt, >=latex]
              \node[state] (ql) {$q_l$};
              \node[state, right=of ql] (1) {};
              \node[state, right=of 1] (2) {};
              \node[state, right=of 2] (3) {};
              \node[state, right=of 3] (4) {};
              \node[state, right=of 4] (5) {};

              \path[->] (ql) edge node {$\gamma/\gamma$} (1)
                        (1) edge[loop above] node {$0/0$} (1)
                            edge[bend left] node {$1/1$} (2)
                        (2) edge[loop above] node[align=center] {$0/0$\\$1/1$} (2)
                            edge node {$\gamma/\gamma$} (1)
                            edge[bend left=60] node[above] {$\#/\#$} (ql)
                            edge node {$\$/\$$} (3)
                        (3) edge[loop above] node[align=center] {$0/0$\\$1/1$} (3)
                            edge node {$\$/\$$} (4)
                        (4) edge node {$0/0$} (5)
              ;
            \end{tikzpicture}
            \caption{The automaton part used to check that, in each configuration, each symbol was check-marked. The transitions labeled with $\gamma/\gamma$ exist for every $\gamma \in \Delta$.}\label{fig:PSpaceCheckAllCheckmarked}
          \end{center}
        \end{figure}

        It checks that every digit-block contains at least one $1$ (i.\,e.\ that every symbol was check-marked). As there are exactly $p(n)$ many $\checkmark$s in $\boldsymbol{q}_{c_0}$, applying $\boldsymbol{q}_{c_0} \circ{}$ to a word (encoding a configuration sequence) will result in check-marks on the first $p(n)$ symbols of every configuration. If there is a symbol without a check-mark, the corresponding configuration is too long and $q_l \boldsymbol{q}_{c_0} \circ{}$ will be undefined on that word.

        Summing this up, we have that $q_l \boldsymbol{q}_{c_0} q_c \circ{}$ is defined on a word if and only if that word is a prefix of a word $w$ which satisfies the following conditions.
        \begin{itemize}
          \item $w$ is from $\left( (\Delta 0^*)^* \# \right)^* (\Delta 0^*)^* \$ 0^* \$ 0$, i.\,e.\ $w$ encodes a configuration sequence.
          \item The digit-block which belongs to a symbol at position $i$ (for any time step) allows for counting up to $p(n) - i$ in (reverse) binary.
          \item The encoded configurations are all of length exactly $p(n)$.
          \item $w$ encodes a valid computation of $\mathcal{M}_0$ when started in the initial configuration $c_0$.
        \end{itemize}

        \paragraph{Finding Final States}
        Thus, we have distinguished valid computations from other inputs. Next, we need to distinguish accepting computations (i.\,e.\ those where we reach a final state from $E$). For this, we use state $e$ from the $\inverse{\mathscr{S}}$-automaton depicted in \autoref{fig:PSpaceFinalStates}. This will toggle the trailing $0$ into a $1$ if and only if the word contains a final state and the digit block in the suffix $\$ 00 \dots 0 \$ 0$ contains only $0$.
        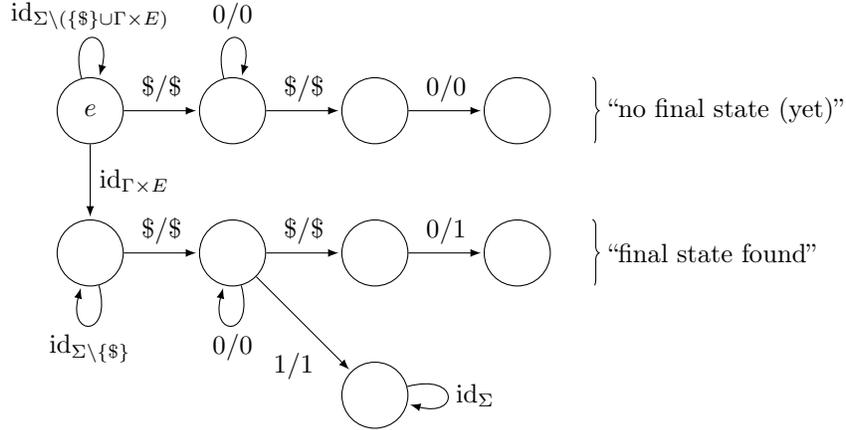
\begin{figure}[h]
          \begin{center}
            \begin{tikzpicture}[auto, shorten >=1pt, >=latex]
              \node[state] (e) {$e$};
              \node[state, right=of e] (1) {};
              \node[state, right=of 1] (2) {};
              \node[state, right=of 2] (3) {};
              \node[state, below=of e] (4) {};
              \node[state, right=of 4] (5) {};
              \node[state, right=of 5] (6) {};
              \node[state, right=of 6] (7) {};
              \node[state, below=of 6] (fail) {};

              \path[->] (e) edge[loop above] node {$\id_{\Sigma \setminus \left( \{ \$ \} \cup \Gamma \times E \right)}$} (e)
                            edge node {$\$/\$$} (1)
                            edge node {$\id_{\Gamma \times E}$} (4)
                        (1) edge[loop above] node {$0/0$} (1)
                            edge node {$\$/\$$} (2)
                        (2) edge node {$0/0$} (3)
                        (4) edge[loop below] node {$\id_{\Sigma \setminus \{ \$ \}}$} (4)
                            edge node {$\$/\$$} (5)
                        (5) edge[loop below] node {$0/0$} (5)
                            edge node {$\$/\$$} (6)
                            edge node[below left, pos=0.7] {$1/1$} (fail)
                        (6) edge node {$0/1$} (7)
                        (fail) edge[loop right] node {$\id_\Sigma$} (fail);
              ;
              \coordinate[right=1cm of 3.north] (as);
              \coordinate[right=1cm of 3.south] (ae);
              \draw[decorate, decoration=brace] (as) -- node[right=0.1cm] {\enquote{no final state (yet)}} (ae);
              \coordinate[right=1cm of 7.north] (bs);
              \coordinate[right=1cm of 7.south] (be);
              \draw[decorate, decoration=brace] (bs) -- node[right=0.1cm] {\enquote{final state found}} (be);
            \end{tikzpicture}
            \caption{The automaton part used to find final states in a valid computation.}\label{fig:PSpaceFinalStates}
          \end{center}
        \end{figure}

        \paragraph{The Reduction}
        This gives us all necessary parts to prove that the word problem for the automaton-inverse semigroup $S$ generated by the union of all the previous $\inverse{\mathscr{S}}$-automata is $\PSPACE$-hard. For this, we need to state the reduction function for any language $L \subseteq \Lambda^*$ which can be decided by a $\PSPACE$-machine $\mathcal{M}$. The function needs to map a problem instance of $L$, i.\,e.\ a word $w \in \Lambda^*$, to a problem instance of the word problem for $S$. Let $c_0(w)$ be the initial configuration of $\mathcal{M}_0$ on input of the encoding of $w$ and $\mathcal{M}$ (where the latter only depends on $L$). If $n$ denotes the length of the input encoding, then $c_0(w)$ has length $p(n)$. As $\mathcal{M}_0$ is a $\PSPACE$-universal Turing Machine, its computation starting in $c_0(w)$ accepts if and only $\mathcal{M}$ accepts $w$. Note that any position of $c_0(w)$ can be computed in logarithmic space. Thus, we can compute $e q_l \boldsymbol{q}_{c_0(w)} q_c$ and $q_l \boldsymbol{q}_{c_0(w)} q_c$ in logarithmic space as well. We use this pair of elements as the problem instance for the word problem (in fact, for the word problem's complement).

        Now, we show that the thus defined function is a reduction function. First, suppose that $w' \in L$. Then, $\mathcal{M}$ accepts $w'$ and $\mathcal{M}_0$ accepts the encoding $w$ of the pair $w'$ and $\mathcal{M}$. Let $n$ denote the length of this encoding. There is a sequence of configurations $c_0, c_1, \dots, c_T$ which describes a valid computation of $\mathcal{M}_0$ such that $c_0 = c_0(w)$ and $c_T$ contains a final state (i.\,e. a symbol from $\Gamma \times E$). As before, let $c_t'$ be the word obtained from $c_t$ by inserting $0^k$ blocks after each symbol with $k = \lceil \log_2 p(n) \rceil$. Then $v = q_l \boldsymbol{q}_{c_0(w)} q_c \circ c_1' \# c_2' \# \dots c_T' \$ 0^k \$ 0$ is defined and, because only the digits blocks have been manipulated (excluding the one in the $\$ 0^k \$ 0$ suffix), it contains an element from $\Gamma \times E$. Also, it ends with a $0$. This $0$ is toggled by $e \circ{}$ into a $1$. Thus, $e \circ v \neq v$ and $e q_l \boldsymbol{q}_{c_0(w)} q_c \neq q_l \boldsymbol{q}_{c_0(w)} q_c$ in the generated (inverse) semigroup.

        For the other direction, suppose $w' \not\in L$. Then, $\mathcal{M}_0$ will not accept the encoding $w$ of the pair $w'$ and $\mathcal{M}$. Now, if a word $u \in \Sigma^*$ is not a prefix of a valid encoding of $\mathcal{M}_0$'s computation on input $w$, then $q_l \boldsymbol{q}_{c_0(w)} q_c \circ{}$ will be undefined on $u$ and so will be $e q_l \boldsymbol{q}_{c_0(w)} q_c \circ{}$. If $u$ is a prefix of a valid encoding of the computation, then $u$ must not contain a symbol from $\Gamma \times E$ because $\mathcal{M}_0$ does not accept. Thus, $e$ will act like the identity on $q_l \boldsymbol{q}_{c_0(w)} q_c \circ u$. This yields $e q_l \boldsymbol{q}_{c_0(w)} q_c = q_l \boldsymbol{q}_{c_0(w)} q_c$ in the generated (inverse) semigroup, which concludes the proof for the automaton-inverse semigroup case.

        \paragraph{Extension to Groups}
        In the group case, we need a (single) rational constraint. This is used to ensure the well-formedness of the words on which the states act. The rational constraint we will be using depends on the input problem instance $w \in \Lambda^*$ to the reduction function. Let $n$ be the length of the encoding of the pair $w$ and $\mathcal{M}$ (one of the $\PSPACE$ machines for the problem to reduce) as input for $\mathcal{M}_0$ and let $k = \lceil \log_2 p(n) \rceil$. Then, the constraint is
        \[
          C(w) = \left( \left( \Delta 0^k \right)^{p(n)} \# \right)^* \left( \Delta 0^k \right)^{p(n)} \$ 0^k \$ 0 \text{.}
        \]
        Because we only need to count up to $p(n)$, we can construct an acceptor for this constraint in logarithmic space.

        Due to the constraint, we do not have to consider words that do not encode a sequence of (valid) configurations (we do have to consider words encoding a sequence of configurations that is not a valid computation of $\mathcal{M}_0$, however). So far, we have checked the validity of the transition from time step $t$ to time step $t + 1$ by including the dashed transition in \autoref{fig:PSpaceCompleteTransitions} only if $\tau(\gamma_{-1}, \gamma_0, \gamma_1) = \gamma_0'$. Since we need a complete automaton for the group case, we also need to have transitions if this is not the case. That is where the so-far unused $0$-block in the suffix $\$ 00 \dots 0 \$ 0$ comes into play; we use it in the $\inverse{\mathscr{S}}$-automaton depicted in \autoref{fig:PSpaceGroupFail}.
        \begin{figure}[h]
          \begin{center}
            \begin{tikzpicture}[auto, shorten >=1pt, >=latex]
              \node[state] (f) {$f$};
              \node[state, right=of f] (1) {};
              \node[state, right=of 1] (2) {};
              \node[state, right=of 2] (3) {};
              \node[state, right=of 3] (4) {};

              \path[->] (f) edge[loop above] node {$\id_{\Sigma \setminus \{ \$ \}}$} (f)
                        (f) edge node {$\$/\$$} (1)
                        (1) edge[loop above] node {$1/0$} (1)
                        (1) edge node {$0/1$} (2)
                        (2) edge[loop above] node[align=center] {$0/0$\\$1/1$} (2)
                        (2) edge node {$\$/\$$} (3)
                        (3) edge node {$0/0$} (4)
              ;
            \end{tikzpicture}
          \end{center}
          \caption{Automaton used for counting invalid transitions.}\label{fig:PSpaceGroupFail}
        \end{figure}
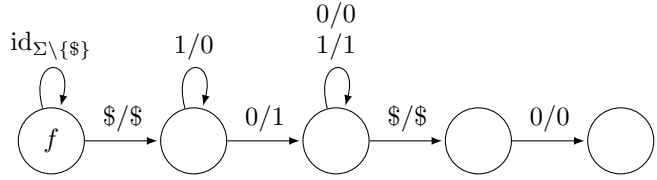
        The state $f$ will skip everything up to the first $\$$ and add one to the following digit block (which is considered to be a reverse/least significant bit first binary representation of some number). Remember that this will trigger a different behavior of $e$ later on. Now, if we do not have the dashed transitions in \autoref{fig:PSpaceCompleteTransitions}, we add transitions to $f$ instead. The behavior of the automaton may now intuitively be described as follows. If the configuration sequence does not belong to a valid computation of $\mathcal{M}_0$, then $\boldsymbol{q}_{c_0(w)} \circ{}\!$ is not undefined anymore but it will add one to the last digit block for every position with an invalid transition.

        In the other places where the automaton previously was not complete, we may add transitions into a sink state (in such a way that the resulting automaton is a $\mathscr{G}$-automaton). The rational constraint ensures that none of these transitions is actually ever used.

        Now, consider a problem $L$ that is decided by the $\PSPACE$-machine $\mathcal{M}$. First, suppose the problem instance $w \in \Lambda^*$ belongs to $L$. Then, the computation of $\mathcal{M}_0$ on the encoding of $w$ and $\mathcal{M}$ is accepting. Thus, there is a $u \in \Sigma^*$ which satisfies the constraint $C(w)$ and encodes this computation of $\mathcal{M}_0$ up to the first configuration that reaches a final state (i.\,e.\ contains an element from $\Gamma \times E$). On this $u$, we have $\boldsymbol{q}_{c_0(w)} \circ u \neq e \boldsymbol{q}_{c_0(w)} \circ u$ because the digit block at the end (counting the invalid transitions) remains $0^k$ and $e \circ{}$ toggles the trailing $0$ into a $1$.

        If $w \not\in L$, then in any word $u$ satisfying the constraint (i.\,e.\ encoding a sequence of valid configurations), there will either be an invalid transition or no final state is reached, i.\,e.\ $u$ does not contain a symbol from $\Gamma \times E$. In either case, $\boldsymbol{q}_{c_0(w)}$ and $e \boldsymbol{q}_{c_0(w)}$ will act on $u$ in the same way.

        This shows that mapping $w$ to the element pair $\boldsymbol{q}_{c_0(w)}$ and $e \boldsymbol{q}_{c_0(w)}$ and the rational constraint $C(w)$ is a reduction from $L$ to the (complement of the) word problem with a single rational constraint for the automaton group generated by the automaton described above. One may note that we use neither $q_c$ nor $q_l$ for this reduction since their roles are now covered by the rational constraint.
      \end{proof}

      So far, we have only been able to show $\PSPACE$-hardness of the word problem with a rational constraint in the group case. The question remains whether the problem is also $\PSPACE$-hard in the absence of rational constraints (in the uniform and non-uniform cases). Steinberg asks this question as well \cite[Question~5]{steinberg2015some} and suspects that the answer is \enquote{yes}. While the encoding techniques presented in the previous proof come close to an answer to this question (as they show $\PSPACE$-hardness for automaton-inverse semigroups), they still require some means of propagating a single bit from one element of the automaton structure to the next. It is not clear how this can be done if the input is not in some well-formed, well-known form or whether this can be done at all.

      However, we can show a weaker result, with which we conclude this section: $\NL$-hardness of the uniform case.
      \begin{prop}\label{prop:groupUniformNLhardness}
        The uniform word problem for automaton groups is $\NL$-hard.
      \end{prop}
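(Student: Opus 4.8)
The plan is to reduce from the directed graph reachability problem $\textsc{STCon}$ (given a finite directed graph $G = (V, E)$ and two vertices $s, t \in V$, is $t$ reachable from $s$?), a classical $\NL$-complete problem. Since the gadget below makes the two compared elements \emph{differ} exactly when $t$ is reachable, the construction is really a $\log$-space reduction from $\textsc{STCon}$ to the \emph{complement} of the uniform word problem. As $\NL$ is closed under complementation \cite{immerman1988nondeterministic, szelepcsenyi1988method}, this still yields $\NL$-hardness of the uniform word problem itself.

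Given an instance $(G, s, t)$, I would build a $\mathscr{G}$-automaton $\mathcal{T}$ over the alphabet $\Sigma = V \sqcup \{ 0, 1 \}$. The idea is to interpret an input word whose letters come from $V$ as a candidate walk $v_0 v_1 \dots v_\ell$ in $G$ and to let a distinguished state $q$ verify, while reading the word letter by letter, that $v_0 = s$ and that every consecutive pair $(v_{i-1}, v_i)$ is an edge of $G$. For this, $\mathcal{T}$ has one state $\langle v \rangle$ for each $v \in V$ (remembering the last vertex read while the walk is still valid), a ``rejected'' state and a ``done'' state (both acting as the identity), and an identity state $e$. Concretely, from $q$ the automaton moves to $\langle s \rangle$ on input $s$ and to the rejected state on any other first letter; from $\langle v \rangle$ it moves to $\langle w \rangle$ on an input letter $w \in V$ with $(v, w) \in E$ and to the rejected state otherwise; and upon reading a bit $b \in \{ 0, 1 \}$ from $\langle t \rangle$ it outputs $1 - b$ (and goes to the done state), whereas from every other vertex state it outputs $b$ and rejects. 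All transitions not mentioned output their input letter. Thus $q \circ{}$ toggles the first bit following a valid $s$-to-$t$ walk and acts as the identity everywhere else, while $e \circ{}$ is the identity.

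Crucially, $\mathcal{T}$ is a genuine $\mathscr{G}$-automaton: every state's first-level output map $a \mapsto b$ is a permutation of $\Sigma$ --- it is the identity except at the state $\langle t \rangle$, where it merely swaps $0$ and $1$ --- so $\mathcal{T}$ is complete, deterministic, inverse-deterministic and inverse-complete. Moreover $\mathcal{T}$, together with the two length-one state sequences $q$ and $e$, can clearly be produced from $(G, s, t)$ in logarithmic space (checking whether $(v, w) \in E$ is a $\log$-space test). It then remains to verify correctness. If $t$ is reachable from $s$, pick a path $s = v_0, v_1, \dots, v_k = t$ and set $u = v_0 v_1 \dots v_k\,0$; then $q \circ u = v_0 \dots v_k\,1 \neq u = e \circ u$. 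Conversely, toggling can only occur from the state $\langle t \rangle$ reached along a valid walk, i.e.\ along an actual $s$-$t$ path; if none exists, $q$ acts as the identity on every $u \in \Sigma^*$ and hence equals $e$. Therefore $q$ and $e$ act differently on some word if and only if $t$ is reachable from $s$, as required.

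The step I expect to be the main obstacle is making the gadget an \emph{invertible and complete} automaton rather than a partial one: for semigroups or inverse semigroups one could simply leave the action undefined on malformed or dead-end inputs, but a $\mathscr{G}$-automaton forces a total, length-preserving \emph{bijection}, so the ``rejection'' of an invalid walk must be realized as the identity and the only nontrivial behaviour must be an involution (the single bit toggle). Keeping every state's first-level action a permutation of $\Sigma$ is exactly what guarantees this, and it is the point that needs the most care; the remaining bookkeeping (the $\log$-space bound and the complementation argument via $\NL = \mathrm{co}\text{-}\NL$) is routine.
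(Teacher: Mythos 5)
Your proof is correct and takes essentially the same approach as the paper: a $\log$-space reduction from a reachability-type $\NL$-complete problem to the (complement of the) word problem, using an automaton whose states act as the identity except for a single bit-toggle triggered upon reaching the target/accepting state, together with $\NL = \mathrm{co}\text{-}\NL$. The only cosmetic difference is that the paper reduces from emptiness of a deterministic complete $\{0,1\}$-acceptor (so the acceptor itself, with final states swapping $0$ and $1$, becomes the $\mathscr{G}$-automaton over the binary alphabet), whereas you encode the vertices of an \textsc{STCon} instance into the alphabet and verify the walk explicitly.
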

      \begin{proof}
        It is easy to see that the following problem is $\NL$-hard.
        \problem
          [$\Sigma = \{ 0, 1 \}$]
          {a deterministic and complete $\Sigma$-acceptor $\mathcal{A}$}
          {is $L(\mathcal{A})$ empty?}
        For a $\Sigma$-acceptor $\mathcal{A} = (Z, \Sigma, \delta, z_0, F)$, we define the automaton $\mathcal{T}_{\mathcal{A}} = (Z, \Sigma, \delta')$ by setting
        \begin{align*}
          \delta' &= \{ \trans{z}{a}{a}{z'} \mid a \in \Sigma, z \in Z \setminus F, \transa{z}{a}{z'} \in \delta \} \cup{}\\
          &\phantom{{}={}} \{ \trans{z}{0}{1}{z'} \mid z \in F, \transa{z}{0}{z'} \in \delta \} \cup \{ \trans{z}{1}{0}{z'} \mid z \in F, \transa{z}{1}{z'} \in \delta \} \text{.}
        \end{align*}
        Structurally, $\mathcal{T}_{\mathcal{A}}$ behaves like $\mathcal{A}$. For non-final states $z \in Z \setminus F$, we have $z \circ 0 = 0$ and $z \circ 1 = 1$, while, for final states $z \in F$, $z \circ{}$ swaps $0$ and $1$. One should verify that, for all $\Sigma$-acceptors $\mathcal{A}$, $\mathcal{T}_{\mathcal{A}}$ is a $\mathscr{G}$-automaton by construction.

        We use $\mathcal{T}_{\mathcal{A}}$ to give a reduction from the problem stated above to the uniform word problem for automaton groups. An acceptor $\mathcal{A}$ is mapped to the automaton $\mathcal{T}_{\mathcal{A}}$ and the input sequence $z_0$, which consists of only a single state (i.\,e.\ $n = 1$). For the other sequence, we use $\idGrp$ (or, more precisely, the empty sequence with $m = 0$, which we consider to act as the identity on $\Sigma^*$). This mapping can, clearly, be computed by a deterministic Turing Machine in logarithmic space.

        Next, we show that $L(\mathcal{A}) = \emptyset$ if and only if $z_0 \circ{}$ is the identity on $\Sigma^*$. Suppose there is a word $u = a_1 a_2 \dots a_\ell \in L(\mathcal{A})$ (with $a_1, a_2, \dots, a_{\ell} \in \Sigma$). Then, we have $z_0 \cdot u = z_F$ for a $z_F \in F$. Therefore, we have $z_0 \circ u 0 = v (z_F \circ 0) = v 1 \neq u 0$ for a word $v \in \Sigma^*$ of length $\ell$. Thus, $z_0 \circ{}$ is not the identity on $\Sigma^*$. Now, suppose there is a word $u = a_1 a_2 \dots a_\ell$ (with $a_i \in \Sigma$) such that $u \neq v$ for $v = z_0 \circ u$. Since $z_0 \circ{}$ will map any prefix of $u$ to $v$'s prefix of the same length, we may, without loss of generality, assume that $u$ and $v$ differ only in the last letter. Let $z_F = z_0 \cdot a_1 a_2 \dots a_{\ell - 1}$. Then, we have $z_F \circ a_\ell \neq a_\ell$. However, this can only be the case if $z_F \in F$. Thus, we have $a_1 a_2 \dots a_{\ell - 1} \in L(\mathcal{A})$ and, therefore, $L(\mathcal{A}) \neq \emptyset$.
      \end{proof}
      If one bounds $n$ and $m$ in the uniform word problem for automaton groups by some constant, then, by \autoref{prop:uniformUpperBound}, this problem is in $\NSPACE(\log |\mathcal{T}|)$. By the Space Hierarchy Theorem \cite{Seiferas1973switching, Pap94}, it, therefore, cannot be $\PSPACE$-hard. This implies that, if one wants to show $\PSPACE$-hardness of the uniform word problem for automaton groups, then the sum $n + m$ must increase with the size of the instances. The same is in particular also true for showing $\PSPACE$-hardness of the word problem for a specific automaton group.
    \end{subsection}
  \end{section}

  \newpage
  \section*{Acknowledgements}
    The first author was supported by Austrian Science Fund (FWF) project P29355-N35. The second author acknowledges support from INDAM-GNSAGA.

  \section*{Summary of the contributions of this paper}
    \begin{itemize}
      \item We extend the notion of automaton semigroups to semigroups generated by partial automata. We show that, if $S$ is an automaton group (in the partial sense), then $S$ adjoined with a zero is a complete automaton semigroup (\autoref{prop:SAutSGImpliesS0CompAutSG}).
      \item We introduce the notion of automaton-inverse semigroups as semigroups generated by partial, invertible automata.
      \item We extend Steinberg's observation about an $\NSPACE(n)$ algorithm for the word problem of automaton groups to cover the uniform word problem for automaton semigroups, automaton-inverse semigroups and automaton groups, each with and without rational constraints (\autoref{prop:uniformUpperBound}). We also consider the algorithm for the non-uniform version of the word problem (\autoref{prop:nonuniformUpperBound}).
      \item We show that the upper bound on the length of a word on which two distinct elements of an automaton semigroup act differently must be exponential (\autoref{prop:lowerBoundOnWordLength}). The question about a better upper bound than the trivial one was raised by Cain \cite[Open problem 3.6]{cain2009automaton}.
      \item We give simple constructions which show that the uniform word problems for automaton semigroups and automaton-inverse semigroups are $\PSPACE$-hard and, thus, $\PSPACE$-complete (\autoref{prop:uniformPSPACEhard}). For automaton groups, the construction shows $\PSPACE$-hardness of the uniform word problem with a single rational constraint (thus, also, showing $\PSPACE$-completeness of this problem).
      \item We give a more complex construction which shows that there is an au\-to\-ma\-ton-inverse semigroup (and, thus, also an automaton semigroup) for which the word problem is $\PSPACE$-complete (\autoref{prop:nonuniformPSPACEhardness}). We extend the construction to show that there is an automaton group for which the word problem with a single rational constraint is $\PSPACE$-complete.
      \item We give a simple reduction to show $\NL$-hardness of the uniform word problem for automaton groups (\autoref{prop:groupUniformNLhardness}).
    \end{itemize}

\newpage

\end{document}